\newtheorem{theorem}{Theorem}
\newtheorem{lemma}{Lemma}
\newtheorem{corollary}{Corollary}
\title{On the Theory of Uniform Satellite Constellation Reconfiguration}
\author{David Arnas\thanks{Purdue University, IN, USA. Email: \textsc{darnas@purdue.edu}}, Richard Linares\thanks{Massachusetts Institute of Technology, MA, USA. Email: \textsc{linaresr@mit.edu}}}
\begin{document}

\date{}	

\maketitle

\thispagestyle{firststyle}

\begin{abstract}
This work focuses on the study of the reconfiguration strategies available for uniformly distributed satellite constellations and slotting architectures. Particularly, this manuscript deals with the cases of reducing, maintaining, and also increasing the number of available positions for satellites in the space structure, and takes into account the potential minimum distances between spacecraft in the configuration to assure the safety of the system. To that end, several approaches to solve the reconfiguration problem are presented based on the properties of Flower Constellations, and more particularly, on the properties of uniformity and symmetries present in these uniform distributions.
\end{abstract}

\section{Introduction}

In the last decade the space sector has witnessed a dramatic increase in the number of satellites orbiting the Earth, being a tendency that is expected to continue and increase in the future. This is worsen by the fact that the vast majority of these satellites are located in LEO, making this the region with the current largest number of conjunctions between satellites. Additionally, the paradigm in satellite constellation design has completely changed in that last years. Particularly, satellite constellations have passed from space structures containing 20 to 40 satellites for the largest constellations, to megaconstellations containing hundreds or even thousands of satellites. This creates new challenges not only in satellite constellation design, but also in the tools at our disposal to study this complex systems.

Of particular interest is the study of the possibilities of reconfiguration that these space structures can provide once in orbit. This allows, for instance, to reconfigure the constellation when satellites from it have failed, or to integrate new satellites into an existing constellation. Another application is slotting architectures. Reference~\cite{stm} showed that slotting architectures can be used effectively to reduce the space traffic management problem and assure a minimum distance between all the satellites compliant with the slotting architecture. However, as spacecraft technology advances, satellite control is also expected to improve, and thus, slots can reduce their size as time passes. This opens the question of the possibilities of performing an efficient reconfiguration of these extremely large space structures while maintaining the safety conditions for all the satellites involved in the structure.

To that end, this work focuses on the definition and study of the potential reconfiguration possibilities that satellite constellations can provide. In particular, this manuscript proposes a set of potential reconfiguration possibilities for uniformly distributed constellations, which have been shown to be the ones providing better properties for mitigating conjunctions between satellites in the structure. Examples of this kind of constellations include, for instance, Walker Constellations~\cite{WalkerSC_1} or Draim~\cite{Draim4} Constellations, which have been used extensively both in the past and currently for different missions, including the design of megaconstellations. But there are other satellite constellation designs that follow this principle such as Dufour constellations~\cite{dufour}, Rosette constellations~\cite{Ballard}, the Kinematically regular satellite networks~\cite{Mozhaev}, or even uniform distributions defined in the relative to Earth frame of reference~\cite{Time}. For this reason, this work makes use of the Flower Constellations, since they are the generalization of all possible uniform distributions of satellites that can be generated.

Flower Constellations~\cite{MortariFC,MortariSET1,MortariSET2,FCinsights} are a set of satellite design methodologies based on the properties of Number Theory that allow to analytically define the distribution of both uniform and non-uniform satellites constellations using a minimum number of distribution parameters. This allows to have a very compact formulation to define these constellations, while allowing to study the structure as a complete system. In particular, Lattice Flower Constellations~\cite{Avendano2DL,3dlfc,4dlfc} are of special interest since they deal with the problem of generating all possible uniform distributions that can be defined using a given set of satellites~\cite{ndnfc}. For these reasons, 2D Lattice Flower Constellations is used as a framework from which to build the reconfiguration techniques presented in this work. In addition, this manuscript makes use of Necklace Flower Constellations~\cite{ndnfc,2dnfc,3dnfc}, a generalization of Lattice Flower Constellations that allows to account for both uniform and non-uniform distributions. Particularly, this work will use the 2D Necklace Flower Constellation formulation~\cite{2dnfc} to provide a framework to naturally define available positions in a distribution of satellites, which is very useful when defining slotting architectures and satellite reconfigurations.

The study of the reconfiguration in satellite constellations is not new in the literature. For instance, Ref.~\cite{weckreconfig} and Ref.~\cite{davisreconfig} dealt with the problem of finding the optimal set of reconfiguration maneuvers to reconfigure a satellite constellation from a given original distribution to a defined final distribution while minimizing the fuel required. In a similar way, Ref.~\cite{lowreconfig} studied the same problem using low thrust engines. On the other hand, Ref.~\cite{FCcorrection} studied the maneuvering sequence of satellites in 2D Lattice Flower Constellations to perform a complete constellation reconfiguration while maintaining the characteristic uniformity of these structures. However, the problem considered in this work is different. Instead of finding the best combination of maneuvers to achieve a given final distribution, this manuscript deals with the problem of generating all the possible final satellite distributions that are compatible with the original constellation under different considerations while taking into account the minimum distances between satellites in the original and final architectures.

This work is presented as follows. First, a summary of the satellite constellation design methodologies used in this work is included for reference purposes. Second, the different reconfiguration techniques are introduced and studied. Particularly, four different problems are presented. In the first one, the positions of the original constellation are preserved during the reconfiguration. To that end, both a uniform and a non-uniform expansion approaches are provided, including the relation of 2D Necklace Flower Constellations with uniform expansions. In the second case of study, we allow the original satellites and slots to be repositioned in their orbital planes, while the new objects can be located in these orbits or others in the space. In the third situation studied, a complete reconfiguration is considered taking into account that satellites will require a low fuel budget to perform their reconfiguration. And finally, in the last case of study, we present the case of direct substitution of slots and direct addition of slots into an original and unaltered slotting distribution. For all this approaches, examples of application are presented to provide a deeper insight into the problem and the possibilities of these methodologies.


\section{Preliminaries}

In this section, a summary of analytical satellite constellation design formulations that are used in this manuscript is provided as a reference for the reader. In particular, we include the formulations of 2D Lattice and Necklace Flower Constellations as they serve as the base from which the theory presented in this work is built on, and Walker constellations due to their historical importance both in the literature and in engineering applications.

\subsection{2D Lattice Flower Constellations}

2D Lattice Flower Constellations~\cite{Avendano2DL} is a satellite constellation design formulation that allows to distribute satellites uniformly in the right ascension of the ascending node ($\Omega$), and the mean anomaly ($M$), while keeping the semi-major axis ($a$), eccentricity ($e$), inclination ($inc$), and argument of perigee ($\omega$) common for all the satellites of the constellation. In 2D Lattice Flower Constellations the relative distribution of the constellation is governed by three parameters, the number of orbits $L_{\Omega}$, the number of satellites per orbit $L_{M}$, and the combination number $L_{M\Omega}$ using the following expressions:
\begin{eqnarray}\label{eq:2dlfc}
\Delta\Omega_{ij} & = & \displaystyle\frac{2\pi}{L_{\Omega}}i, \nonumber \\ 
\Delta M_{ij} & = & \displaystyle\frac{2\pi}{L_{M}}\Big(j - \frac{L_{M\Omega}}{L_{\Omega}}i\Big),
\end{eqnarray}
where $\Delta \Omega_{ij}$ and $\Delta M_{ij}$ defines the relative distribution of the constellation with respect to a satellite of the configuration, and $i\in\{1,\dots,L_{\Omega}\}$ and $j\in\{1,\dots,L_{M}\}$ names each of the satellites of the constellation in their orbital planes and their position in the orbit respectively. Note also that in this formulation the combination number ($L_{M\Omega}$) controls the relative phasing between two consecutive orbital planes from the configuration.

\subsection{2D Necklace Flower Constellations}

2D Necklace Flower Constellations is a satellite constellation design formulation that answers the problem of how to distribute a group of satellites over a set of predefined available positions. This can represent, for instance, the problem of distributing satellites in a constellation following a set of mission requirements, the progressive development of a large constellation through different launches while maintaining some functionality, or the uniform distribution of satellites or patterns of spacecraft within a previously defined slotting architecture. 

To that end, 2D Necklace Flower Constellations make use of 2D Lattice Flower Constellations to define a set of uniform available position, from which a subset where the satellites will be located is selected. Particularly, Let $\mathcal{G}_{\Omega}$ and $\mathcal{G}_{M}$ be the necklaces defined in the right ascension of the ascending node, and the mean anomaly respectively. These necklaces represent which subset of orbital planes ($\mathcal{G}_{\Omega}$), and which subset of orbit positions ($\mathcal{G}_{M}$) are selected from the set of the available ones, that is $\mathcal{G}_{\Omega}(i^*)\in\{1,\dots,L_{\Omega}\}$, and $\mathcal{G}_{M}(j^*)\in\{1,\dots,L_{M}\}$. In other words, the position of a satellite in position $j^*$ and orbital plane $i^*$ of the real constellation can be related to the ones from the 2D Lattice Flower Constellation through:
\begin{eqnarray} \label{eq:necklace}
i & = & \mathcal{G}_{\Omega}(i^*) \mod(L_{\Omega}), \nonumber \\
j & = & \mathcal{G}_{M}(j^*) + S_{M\Omega}\mathcal{G}_{\Omega} \mod(L_M),
\end{eqnarray} 
where $S_{M\Omega}$ is a shifting parameter that may depend on each particular satellite of the constellation. The shifting parameter allows to control the phasing between two consecutive necklaces in a similar way the combination number $L_{M\Omega}$ was doing in Eq.~\eqref{eq:2dlfc}. It is important to note that the necklaces ($\mathcal{G}_{\Omega}$ and $\mathcal{G}_{M}$) are also subjected to the modular arithmetic as seen in the previous expression. Equation~\eqref{eq:necklace} can be introduced in Eq.~\eqref{eq:2dlfc} to obtain:
\begin{eqnarray}\label{eq:2dnfc}
\Delta\Omega_{i^*j^*} & = & \displaystyle\frac{2\pi}{L_{\Omega}}\mathcal{G}_{\Omega}(i^*), \nonumber \\ 
\Delta M_{i^*j^*} & = & \displaystyle\frac{2\pi}{L_{M}}\Big(\mathcal{G}_M(j^*) + S_{M\Omega}\mathcal{G}_{\Omega}(i^*)-\frac{L_{M\Omega}}{L_{\Omega}}\mathcal{G}_{\Omega}(i^*)\Big). 
\end{eqnarray}
which defines the distribution of the constellation. Additionally, if we are interested in obtaining congruent distributions, that is, configurations that maximize the number of symmetries, the shifting parameter has to fulfill the following condition:
\begin{equation} \label{eq:shiftingcondition}
Sym(\mathcal{G}_M) \mid S_{M\Omega}L_{\Omega} - L_{M\Omega},
\end{equation}
which reads $Sym(\mathcal{G}_M)$ divides $S_{M\Omega}L_{\Omega} - L_{M\Omega}$, where $Sym(\mathcal{G}_M)$ is the symmetry of the necklace in mean anomaly defined by:
\begin{equation}
\mathcal{G}_M \equiv \mathcal{G}_M + Sym(\mathcal{G}_M),
\end{equation} 
where $\equiv$ represents equivalent configurations~\cite{ndnfc}. This means that by definition, a symmetry of a necklace $Sym$ has to be a divisor of the number of available positions in the distribution, or in other words, $Sym(\mathcal{G}_M)|L_M$ and $Sym(\mathcal{G}_{\Omega})|L_{\Omega}$. Finally, it is important to note that in order to avoid duplicates in the configuration $S_{M\Omega}\in\{0,\dots,Sym(\mathcal{G}_M)-1\}$.

\subsection{Walker Constellations}

Walker-Delta Constellations~\cite{WalkerSC_1} is the most well-known satellite constellation design formulation and has been extensively studied in the literature and also applied in several space missions. The distribution defined by Walker constellations is based on the idea of performing an evenly distribution of satellites in circular orbits. To achieve that goal, satellites in the constellation share the same semi-major axis, eccentricity, inclination and argument of perigee, while the distribution is performed in the right ascension of the ascending node and the mean anomaly of the orbits. Particularly, a Walker constellation is defined by the following notation: $inc: \ t/y/f$, where $inc$ is the inclination of the constellation, $t$ is the number of satellites, $y$ is the number of different orbital planes, and $f \in \{0, \ldots, y-1\}$ is a parameter that controls the relative phasing in mean anomaly between two consecutive orbital planes from the constellation. That way, the relative distribution of the constellation is defined as:
\begin{eqnarray}\label{eq:walker0}
	\Delta\Omega_{ij} & = & 2\pi\displaystyle\frac{i}{y}, \nonumber \\ 
	\Delta M_{ij} & = & 2\pi\displaystyle\frac{y}{t}j + 2\pi\displaystyle\frac{f}{t}i,
\end{eqnarray}
where as before, the distribution is defined with respect to one of the satellites of the constellation. It is important to note that this distribution is essentially a particular case of a 2D Lattice Flower Constellation for circular orbits where $t = L_{\Omega}L_{M}$, $y = L_{\Omega}$, and $f = -L_{M\Omega} \mod(L_{\Omega})$. For this reason, we focus on this work on the 2D Lattice Flower Constellation formulation.


\section{Maintaining the original positions}

In this section we deal with the problem of finding all possible uniform distributions that from a former 2D Lattice Flower Constellation, can generate a constellation with a larger number of satellites in such a way that a subset of the final constellation satellites correspond to spacecraft from the original configuration. In that regard, it is important to note that all the results presented in this section can be apply to both circular and eccentric orbits since the 2D Lattice and Necklace Flower Constellations formulation is a more general result. 

Let $N_s$ be the original number of satellites of the constellation, and let $N_s' = nN_s$ be the number of satellites in the final configuration. The goal is to find all the final configurations that are uniform and that maintain the locations of the original constellation.

Let the original 2D Lattice Flower Constellation be defined by the following distribution:
\begin{eqnarray}\label{eq:2dlfcoriginal}
\Delta \Omega_{ij} & = & 2\pi\displaystyle\frac{i}{L_{\Omega}}; \nonumber \\
\Delta M_{ij} & = & 2\pi\left(\displaystyle\frac{j}{L_{M}} - \frac{L_{M\Omega}i}{L_{\Omega}L_{M}}\right);
\end{eqnarray}
where $L_{\Omega}$ is the number of different orbital planes in which the original constellation is distributed, $L_{M}$ is the number of satellites per orbit, $L_{M\Omega}$ is the configuration number, and $i\in\{1,\dots,L_{\Omega}\}$ and $j\in\{1,\dots,L_{M}\}$ names each satellite of the constellation. In a similar process, let the final constellation distribution be defined by:
\begin{eqnarray}\label{eq:2dlfcfinal}
\Delta \Omega_{i'j'}' & = & 2\pi\displaystyle\frac{i'}{L_{\Omega}'}; \nonumber \\
\Delta M_{i'j'}' & = & 2\pi\left(\displaystyle\frac{j'}{L_{M}'} - \frac{L_{M\Omega}'i'}{L_{\Omega}'L_{M}'}\right);
\end{eqnarray}
where $L_{\Omega}'$ and $L_{M}'$ are the number of orbits, and number of satellites per orbit of the final distribution, $L_{M\Omega}'$ is the configuration number of the new constellation, and $i'\in\{1,\dots,L_{\Omega}'\}$ and $j'\in\{1,\dots,L_{M}'\}$ names each satellite of the final constellation. Note that in order to avoid duplucates in the formulation, $L_{M\Omega}'\in\{0,\dots,L_{\Omega}'-1\}$\cite{ndnfc}.

\begin{lemma}\label{lemma:LO}
	The number of orbits in the final distribution is an integer multiple of the number of orbits in the original constellation, that is, $\exists p\ni L_{\Omega}' = p L_{\Omega}$.
\end{lemma}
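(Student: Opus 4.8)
The plan is to ignore the in‑plane coordinate entirely and argue only with the ascending‑node longitudes. In a 2D Lattice Flower Constellation every one of the $L_{\Omega}$ orbital planes is occupied (each carries $L_{M}\ge 1$ satellites), so the original constellation uses exactly the $L_{\Omega}$ equally spaced node longitudes $2\pi i/L_{\Omega}$, $i\in\{1,\dots,L_{\Omega}\}$; likewise the final constellation uses the $L_{\Omega}'$ equally spaced longitudes $2\pi i'/L_{\Omega}'$, $i'\in\{1,\dots,L_{\Omega}'\}$. Since the reconfiguration keeps every original satellite at its location, it in particular keeps each original value of $\Omega$, so the first set of longitudes must be contained in the second. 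The whole proof then reduces to the elementary fact that if a set of $L_{\Omega}$ equally spaced points on the circle sits inside a set of $L_{\Omega}'$ equally spaced points, then $L_{\Omega}\mid L_{\Omega}'$.

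To make the containment precise I would first pin down a common origin. Because a nonempty subset of the final satellites coincides with (unmoved) original satellites, there is a satellite $C$ lying in both constellations; I would take $C$ as the reference satellite in both Eq.~\eqref{eq:2dlfcoriginal} and Eq.~\eqref{eq:2dlfcfinal}, which is legitimate since the relative description of a lattice is independent of which of its points is chosen as origin. With this normalization the original node longitudes reduce mod $2\pi$ to $\{2\pi i/L_{\Omega}\}$ and the final ones to $\{2\pi i'/L_{\Omega}'\}$, and the inclusion says: for every $i$ there is an $i'$ with $i/L_{\Omega}\equiv i'/L_{\Omega}'\pmod 1$.

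Finally I would extract the divisibility. Applying the inclusion to $i=1$ (the case $L_{\Omega}=1$ is trivial) and reducing $i'$ into $\{0,\dots,L_{\Omega}'-1\}$, both $1/L_{\Omega}$ and $i'/L_{\Omega}'$ already lie in $[0,1)$, hence are equal, so $L_{\Omega}'=i' L_{\Omega}$; setting $p:=i'$ gives the statement. Equivalently, the original nodes form a subgroup of order $L_{\Omega}$ of the order‑$L_{\Omega}'$ cyclic group generated by $2\pi/L_{\Omega}'$ in $\mathbb{R}/2\pi\mathbb{Z}$, so $L_{\Omega}\mid L_{\Omega}'$ by Lagrange's theorem. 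I expect no genuine mathematical obstacle here; the only thing requiring care is the setup, namely justifying that the two lattices may be written about a common reference satellite and that ``maintaining the original positions'' forces the node‑longitude sets to be literally nested, after which the argument is immediate.
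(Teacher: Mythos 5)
Your proposal is correct and follows essentially the same route as the paper: restrict attention to the ascending-node coordinate, require the original set of equally spaced node longitudes to be contained in the final one, and extract $L_{\Omega}\mid L_{\Omega}'$ from the resulting equation (the paper does this by noting $i'=(L_{\Omega}'/L_{\Omega})i$ must be an integer for all $i$, which is your $i=1$ argument). Your extra care about fixing a common reference satellite and the modular arithmetic is a slight tightening of the paper's proof, not a different method.
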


\begin{proof}
	Equation~\eqref{eq:2dlfcfinal} must be able to generate as a subset the same solutions from Eq.~\eqref{eq:2dlfcoriginal} for both satellite constellations to be compatible. Therefore, by focusing on the distribution on the right ascension of the ascending node of the orbits:
	\begin{equation}
	2\pi\displaystyle\frac{i}{L_{\Omega}} = 2\pi\displaystyle\frac{i'}{L_{\Omega}'}
	\end{equation}
	where the modular arithmetic of the equation has been removed since $0\leq i < L_{\Omega}$ and $0\leq i' < L_{\Omega}'$ for any satellite of the constellation. By performing some elemental algebraic operations:
	\begin{equation}
	i' = \displaystyle\frac{L_{\Omega}'}{L_{\Omega}}i, \quad \forall i,
	\end{equation}
	and thus, it is possible to derive that $p = L_{\Omega}'/L_{\Omega}$ is an integer number since the equation must hold true for any value of $i$. Therefore, $L_{\Omega}' = p L_{\Omega}$.	
\end{proof}

\begin{lemma}\label{lemma:LM}
	The number of satellites per orbit in the final distribution is an integer multiple of the number of satellites per orbit in the original constellation, that is, $L_{M}' = (n/p) L_{M}$.
\end{lemma}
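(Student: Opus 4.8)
The plan is to combine the conservation of the total satellite count with Lemma~\ref{lemma:LO} and a single geometric observation about one orbital plane. First I would record that a full 2D Lattice Flower Constellation with parameters $L_{\Omega}'$ and $L_{M}'$ contains exactly $L_{\Omega}'L_{M}'$ satellites, so the hypothesis $N_s' = nN_s$ reads $L_{\Omega}'L_{M}' = nL_{\Omega}L_{M}$. Inserting $L_{\Omega}' = pL_{\Omega}$ from Lemma~\ref{lemma:LO} and cancelling $L_{\Omega}$ gives $pL_{M}' = nL_{M}$, that is, $L_{M}' = (n/p)L_{M}$. The missing piece is that $n/p$ is in fact a positive integer, equivalently that $L_{M}$ divides $L_{M}'$; this does not follow from the counting identity alone, since one can pick parameters with $p \nmid n$ but $p \mid nL_{M}$, so a genuinely geometric argument about the positions is needed.

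To obtain $L_{M}\mid L_{M}'$ I would restrict to the reference plane $\Delta\Omega = 0$, which is the plane $i = L_{\Omega}$ of the original distribution~\eqref{eq:2dlfcoriginal} and the plane $i' = L_{\Omega}'$ of the final distribution~\eqref{eq:2dlfcfinal} (by the computation $i' = (L_{\Omega}'/L_{\Omega})i$ carried out in the proof of Lemma~\ref{lemma:LO}), taking as common reference satellite any original satellite — which, by hypothesis, is also a satellite of the final constellation. On that plane the constant offset $-2\pi L_{M\Omega}i/(L_{\Omega}L_{M})$ reduces to $-2\pi L_{M\Omega}/L_{M}$, an integer multiple of the spacing $2\pi/L_{M}$; hence, modulo $2\pi$, the original satellites there occupy exactly the cyclic subgroup $\tfrac{2\pi}{L_{M}}\mathbb{Z}$ of $\mathbb{R}/2\pi\mathbb{Z}$, and likewise the final satellites on that plane occupy $\tfrac{2\pi}{L_{M}'}\mathbb{Z}$. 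Compatibility of the two configurations forces $\tfrac{2\pi}{L_{M}}\mathbb{Z}\subseteq\tfrac{2\pi}{L_{M}'}\mathbb{Z}$, which holds precisely when $L_{M}\mid L_{M}'$. Writing $L_{M}' = mL_{M}$ with $m$ a positive integer and substituting into $pL_{M}' = nL_{M}$ yields $pm = n$, hence $m = n/p$ and $L_{M}' = (n/p)L_{M}$ with $n/p\in\mathbb{Z}$.

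The step I expect to be the main obstacle — or at least the one requiring the most care — is the reduction to a single plane: one must argue cleanly that, for fixed $i$, the index $j$ sweeps a complete residue system modulo $L_{M}$ and that the offset term does not break the subgroup structure, which is exactly why it is convenient to work on the reference plane (where the offset vanishes modulo $2\pi/L_{M}$) and to fix a satellite common to both constellations as the origin. Everything else is bookkeeping with the counting identity and Lemma~\ref{lemma:LO}.
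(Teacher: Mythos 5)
Your proposal is correct, and its first half is exactly the paper's proof: the paper establishes Lemma~\ref{lemma:LM} purely from the counting identity $N_s' = L_{\Omega}'L_{M}' = pL_{\Omega}L_{M}' = nL_{\Omega}L_{M}$, cancels $L_{\Omega}$, and stops at $L_{M}' = (n/p)L_{M}$. Where you diverge is in insisting that the ``integer multiple'' claim be justified on the spot: you correctly observe that the counting identity alone does not force $n/p\in\mathbb{Z}$ (your implicit counterexample class, e.g.\ $L_{\Omega}=2$, $L_{M}=4$, $n=3$, $p=2$, $L_{M}'=6$, shows $L_{M}'$ integral without $L_{M}\mid L_{M}'$), and you close the gap with a subgroup argument on the reference plane: on $\Delta\Omega=0$ the offset $-2\pi L_{M\Omega}i/(L_{\Omega}L_{M})$ collapses to an integer multiple of $2\pi/L_{M}$, so the original and final satellites there occupy the cyclic subgroups $\tfrac{2\pi}{L_{M}}\mathbb{Z}$ and $\tfrac{2\pi}{L_{M}'}\mathbb{Z}$ of $\mathbb{R}/2\pi\mathbb{Z}$, and containment forces $L_{M}\mid L_{M}'$. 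The paper instead defers integrality to Theorem~\ref{theorem:ij}, where the requirement that $j'$ be an integer for all $(i,j)$ in the coupled Diophantine system yields $p\mid n$. Both routes are sound; yours makes the lemma self-contained and honest about what the counting argument does and does not deliver, at the cost of a geometric digression, while the paper's keeps the lemma as pure bookkeeping and recovers the divisibility as a by-product of the full compatibility analysis in mean anomaly.
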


\begin{proof}
	The number of satellites of a constellation is $N_s = L_{\Omega}L_{M}$ for the case of the original distribution, and $N_s' = L_{\Omega}'L_{M}'$ for the final distribution. Additionally, $N_s' = nN_s$ where $n$ is an integer number. Therefore:
	\begin{equation}
	N_s' = L_{\Omega}'L_{M}' = p L_{\Omega} L_{M}' = n N_s = n L_{\Omega} L_{M},
	\end{equation}
	and thus,
	\begin{equation}
	L_{M}' = \left(\displaystyle\frac{n}{p}\right) L_{M}.
	\end{equation}
\end{proof}

\begin{theorem} \label{theorem:ij}
	The positions that the original satellites occupy in the final constellation relate to the original positions through the following relations:
	\begin{eqnarray}
		i' & = & \left(p\right) i; \nonumber \\
		j' & = & \left(\displaystyle\frac{n}{p}\right) j - \left(\frac{nL_{M\Omega} - pL_{M\Omega}'}{pL_{\Omega}}\right)i \mod\left(\frac{n}{p}L_{M}\right);
	\end{eqnarray}
	where $p|n$, and $L_{M\Omega}'$ must fulfill the compatibility condition:
	\begin{equation}
	\displaystyle\frac{n}{p}L_{M\Omega} - L_{M\Omega}' = BL_{\Omega};
	\end{equation}
	being $B$ an unknown integer number.
\end{theorem}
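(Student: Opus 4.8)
The plan is to require that the final lattice of Eq.~\eqref{eq:2dlfcfinal}, restricted to a suitable subset of its index pairs $(i',j')$, reproduces \emph{exactly} the relative angles of the original lattice in Eq.~\eqref{eq:2dlfcoriginal}, and then to read off both the index map and the constraint on $L_{M\Omega}'$ that makes this map well defined. First I would match the right ascension of the ascending node: repeating the computation in the proof of Lemma~\ref{lemma:LO}, the identity $2\pi i/L_{\Omega} = 2\pi i'/L_{\Omega}'$ together with $L_{\Omega}' = pL_{\Omega}$ (Lemma~\ref{lemma:LO}) gives $i' = (L_{\Omega}'/L_{\Omega})\,i = pi$ for every original plane $i$. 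This already isolates the first relation of the theorem.

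Next I would match the mean anomaly. Since the relative mean anomalies are angles, the condition $\Delta M_{ij} \equiv \Delta M_{i'j'}' \pmod{2\pi}$ is equivalent to
\[
\frac{j}{L_{M}} - \frac{L_{M\Omega}\,i}{L_{\Omega}L_{M}} \;\equiv\; \frac{j'}{L_{M}'} - \frac{L_{M\Omega}'\,i'}{L_{\Omega}'L_{M}'} \pmod{1}.
\]
Substituting $L_{\Omega}' = pL_{\Omega}$, $L_{M}' = (n/p)L_{M}$ (Lemmas~\ref{lemma:LO} and~\ref{lemma:LM}) and $i' = pi$, and then multiplying through by $L_{M}' = (n/p)L_{M}$ — which turns the congruence modulo $1$ into a congruence modulo $(n/p)L_{M}$ — the fractions collapse, the factors of $p$ in the term $L_{M\Omega}'i'/L_{\Omega}'$ cancel, and one is left with $j' \equiv (n/p)\,j - \dfrac{nL_{M\Omega}-pL_{M\Omega}'}{pL_{\Omega}}\,i \pmod{(n/p)L_{M}}$, which is exactly the stated expression for $j'$.

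Finally I would extract the compatibility condition by demanding that this formula return an integer slot index $j'$ for every admissible original plane $i\in\{1,\dots,L_{\Omega}\}$. Because $p\mid n$ (Lemma~\ref{lemma:LM}), the term $(n/p)j$ is an integer, so integrality of $j'$ is equivalent to integrality of $\frac{nL_{M\Omega}-pL_{M\Omega}'}{pL_{\Omega}}\,i$ for all such $i$; evaluating at $i=1$ forces $pL_{\Omega}\mid nL_{M\Omega}-pL_{M\Omega}'$, i.e. $L_{\Omega}\mid \frac{n}{p}L_{M\Omega}-L_{M\Omega}'$, which is precisely $\frac{n}{p}L_{M\Omega}-L_{M\Omega}' = BL_{\Omega}$ for some integer $B$; conversely this divisibility makes the coefficient of $i$ an integer, hence $j'$ an integer for every $i$, so the condition is also sufficient. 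One may additionally note that combining this with $L_{M\Omega}'\in\{0,\dots,L_{\Omega}'-1\}$ leaves exactly $p$ admissible values of $L_{M\Omega}'$, parametrized by $B$.

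I expect the main obstacle to be the careful handling of the modular arithmetic: the angular identities hold only modulo $2\pi$ (equivalently modulo $1$ after normalization), and one must track how the modulus rescales — to $(n/p)L_{M}$ — when clearing denominators, and verify that $j'$ is then pinned down uniquely in $\{1,\dots,L_{M}'\}$. A secondary point worth checking is that the pairs $(i',j')$ produced by these formulas are distinct for distinct original satellites, so that the $N_s$ original spacecraft genuinely embed as a subset of the $N_s'$ satellites of the final constellation; this follows from injectivity of $i\mapsto pi$ from $\{1,\dots,L_{\Omega}\}$ into $\{1,\dots,L_{\Omega}'\}$ and, for each fixed $i$, of the affine map in $j$ modulo $(n/p)L_{M}$.
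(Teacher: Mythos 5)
Your proposal is correct and follows essentially the same route as the paper's proof: match the two lattices angle by angle, use $L_{\Omega}'=pL_{\Omega}$ and $L_{M}'=(n/p)L_{M}$ to clear denominators (your rescaling of the congruence modulo $1$ to modulo $(n/p)L_{M}$ is exactly the paper's introduction of the unknown integer $A$ and passage to coupled Diophantine equations), and then read off the compatibility condition from integrality of the coefficient of $i$ in the $j'$ formula. The one slip is attributing $p\mid n$ to Lemma~\ref{lemma:LM}: that lemma only gives $L_{M}'=(n/p)L_{M}$ and does not assert that $n/p$ is an integer; in the paper, and consistently within your own argument, $p\mid n$ is obtained by applying the very same integrality requirement to the coefficient $(n/p)$ of $j$ in the expression for $j'$ (it must be an integer for $j'$ to be an integer for all $j$), so the fix is immediate. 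Your added remarks on sufficiency of the divisibility condition and on injectivity of the index map $(i,j)\mapsto(i',j')$ go slightly beyond what the paper states and are welcome.
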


\begin{proof}
	The distribution generated by $\{\Delta\Omega_{ij}, \Delta M_{ij}\}$ must be contained as a subset of the available positions generated by $\{\Delta\Omega_{i'j'}', \Delta M_{i'j'}'\}$. Therefore, for some combinations of $\{i', j'\}$, and for every combination of $\{i, j\}$ the following conditions must be fulfilled:
	\begin{eqnarray}
	2\pi\displaystyle\frac{i}{L_{\Omega}} & = & 2\pi\displaystyle\frac{i'}{L_{\Omega}'} = 2\pi\displaystyle\frac{i'}{p L_{\Omega}}, \nonumber \\
	2\pi\left(\displaystyle\frac{j}{L_{M}} - \frac{L_{M\Omega}i}{L_{\Omega}L_{M}}\right) & = & 2\pi\left(\displaystyle\frac{j'}{L_{M}'} - \frac{L_{M\Omega}'i'}{L_{\Omega}'L_{M}'}\right) + 2\pi A \nonumber \\
	& = & 2\pi\left(\displaystyle\frac{j'}{(n/p)L_{M}} - \frac{L_{M\Omega}'i'}{n L_{\Omega}L_{M}}\right) + 2\pi A,
	\end{eqnarray}
	where $A$ is an unknown integer number generated by the result of the modular arithmetic affecting the distribution in mean anomaly. Both expressions can be transformed into two coupled Diophantine equations:
	\begin{eqnarray}
	(p) i & = & i', \nonumber \\
	nL_{\Omega}j - nL_{M\Omega}i & = & pL_{\Omega}j' - L_{M\Omega}'i' + AnL_{\Omega}L_{M},
	\end{eqnarray}
	which can be solved for $i'$ and $j'$ by substituting the value of $i'$ from the first expression into the second:
	\begin{eqnarray}
	i' & = & \left(p\right) i; \nonumber \\
	j' & = & \left(\displaystyle\frac{n}{p}\right) j - \left(\frac{nL_{M\Omega} - pL_{M\Omega}'}{pL_{\Omega}}\right)i \mod\left(\frac{n}{p}L_{M}\right).
	\end{eqnarray}
	The former expressions has to hold true for any combination of the distribution parameters $\{i,j\}$, therefore, the expressions $n/p$, and $(nL_{M\Omega} - pL_{M\Omega}')/(pL_{\Omega})$ must be both integer numbers since $j'$ is always an integer number by construction. This implies that $p|n$, that is, $p$ is a divisor of $n$. On the other hand, this condition also implies that:
	\begin{equation}
	\frac{(n/p)L_{M\Omega} - L_{M\Omega}'}{L_{\Omega}} = B;
	\end{equation}
	where $B$ is an integer number. That is, the compatibility condition between configuration numbers is:
	\begin{equation} \label{eq:Ncdiophantine}
	\displaystyle\frac{n}{p}L_{M\Omega} - L_{M\Omega}' = BL_{\Omega}.
	\end{equation}
\end{proof}

\begin{theorem}\label{theorem:Nc}
	The only possible values of $L_{M\Omega}'$ that generate different distributions and that maintain the compatibility condition are provided by:
	\begin{equation}
		L_{M\Omega}' = L_{M\Omega}'|_0 + CL_{\Omega},
	\end{equation}
	where $L_{M\Omega}'|0$ is equal to:
	\begin{equation}
	L_{M\Omega}'|_0 = \displaystyle\frac{n}{p} L_{M\Omega} \mod\left(L_{\Omega}\right),
	\end{equation}
	and $C\in\{0,\dots,p-1\}$ is the set of integer numbers that allows to define all the compatible configurations.
\end{theorem}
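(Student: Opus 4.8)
The plan is to treat the compatibility condition obtained in Theorem~\ref{theorem:ij}, namely Eq.~\eqref{eq:Ncdiophantine}, as a linear Diophantine constraint linking the final configuration number $L_{M\Omega}'$ to the free integer $B$, and then to use the no-duplicate convention $L_{M\Omega}'\in\{0,\dots,L_{\Omega}'-1\}$ stated just after Eq.~\eqref{eq:2dlfcfinal} to cut the infinitely many integer solutions down to a finite list. Solving Eq.~\eqref{eq:Ncdiophantine} for $L_{M\Omega}'$ gives $L_{M\Omega}' = \tfrac{n}{p}L_{M\Omega} - BL_{\Omega}$; since $p\mid n$ by Theorem~\ref{theorem:ij}, the quantity $\tfrac{n}{p}L_{M\Omega}$ is a well-defined integer, so the admissible $L_{M\Omega}'$ are exactly the integers congruent to $\tfrac{n}{p}L_{M\Omega}$ modulo $L_{\Omega}$, each such representative fixing a unique integer $B$ so that the compatibility condition is genuinely satisfied.

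Next I would pin down which representatives lie in the allowed window. Let $L_{M\Omega}'|_0 = \tfrac{n}{p}L_{M\Omega}\mod(L_{\Omega})$ denote the unique representative of that residue class in $\{0,\dots,L_{\Omega}-1\}$; every other element of the class is $L_{M\Omega}'|_0 + CL_{\Omega}$ for some $C\in\mathbb{Z}$. By Lemma~\ref{lemma:LO} the admissible window is $\{0,\dots,L_{\Omega}'-1\}=\{0,\dots,pL_{\Omega}-1\}$, and the values of the form $L_{M\Omega}'|_0+CL_{\Omega}$ that fall inside it are precisely those with $C\in\{0,1,\dots,p-1\}$, yielding exactly $p$ candidate configuration numbers and establishing the claimed form.

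Finally I would argue that these $p$ candidates correspond to genuinely distinct distributions rather than relabelings of one another. Here I would invoke the cited fact~\cite{ndnfc} that $\{0,\dots,L_{\Omega}'-1\}$ is exactly the range of configuration numbers on which distinct values give inequivalent 2D Lattice Flower Constellations; since the $p$ integers $L_{M\Omega}'|_0+CL_{\Omega}$ are pairwise distinct and all lie in that range, the associated final constellations are pairwise inequivalent, and each is compatible with the original by construction. I expect the only delicate point — and therefore the step to state most carefully — to be the identification of the correct window: it must be the no-duplicate range of the \emph{final} constellation, of width $pL_{\Omega}$, rather than that of the original one, since this is precisely what makes the count come out to $p$ instead of $1$ or $n/p$; everything else is elementary bookkeeping with residues modulo $L_{\Omega}$.
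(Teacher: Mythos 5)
Your proposal is correct and follows essentially the same route as the paper: solve the Diophantine compatibility condition to identify the admissible $L_{M\Omega}'$ as the residue class of $\tfrac{n}{p}L_{M\Omega}$ modulo $L_{\Omega}$, take $L_{M\Omega}'|_0$ as the representative in $\{0,\dots,L_{\Omega}-1\}$, and intersect with the no-duplicate window $\{0,\dots,L_{\Omega}'-1\}$ of width $pL_{\Omega}$ to obtain exactly $p$ values, i.e.\ $C\in\{0,\dots,p-1\}$. Your closing remark making explicit that distinct configuration numbers in that window yield inequivalent constellations is a small sharpening of a point the paper leaves implicit, but the argument is otherwise identical.
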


\begin{proof}
	From the compatibility condition between configuration numbers provided by Eq.~\eqref{eq:Ncdiophantine}:
	\begin{equation}
	L_{M\Omega}' = \displaystyle\frac{n}{p}L_{M\Omega} - B L_{\Omega},
	\end{equation}
	from which we can obtain the smallest possible value of the configuration number for the final constellation ($L_{M\Omega}'|_0$):
	\begin{equation}\label{eq:Nc0}
	L_{M\Omega}'|_0 = \displaystyle\frac{n}{p}L_{M\Omega} \mod(L_{\Omega}),
	\end{equation}
	since $B$ can represent any integer number. 
	
	The objective now is to find all possible compatible distributions. Equation~\eqref{eq:Ncdiophantine} can be rewritten as a Diophantine equation whose variables are $L_{M\Omega}'$ and $B$:
	\begin{equation}
	L_{M\Omega}' + B L_{\Omega} = \displaystyle\frac{n}{p}L_{M\Omega},
	\end{equation}
	which has always solution since $\gcd(1,L_{\Omega}) = 1$. Additionally, the solution for $L_{M\Omega}'$ of this equation is in the form:
	\begin{equation} \label{eq:Ncgeneration}
	L_{M\Omega}' = L_{M\Omega}'|_0 + C L_{\Omega},
	\end{equation}
	where $C$ is any arbitrary integer number, and $L_{M\Omega}'|_0$ is a particular solution of the equation like the one provided by Eq.~\eqref{eq:Nc0}.
	
	The configuration number of the final constellation $L_{M\Omega}'$ can have any value as long as it fulfills Eq.~\eqref{eq:Ncdiophantine}. However, we know from Ref.~\cite{ndnfc} that $L_{M\Omega}'\in\{0,\dots,L_{\Omega}'-1\}$ in order to avoid duplicates in the formulation. This means that, there are a number of compatible configuration numbers equal to $L_{\Omega}'/L_{\Omega} = p$ (possible values of $L_{M\Omega}'$ divided by step size between two consecutive compatible solutions). Therefore, duplicate configurations are avoided if and only if $C\in\{0,\dots,p-1\}$.  
\end{proof}

\begin{theorem} \label{theorem:count1}
	The number of different compatible configurations for a given expansion of the constellation by $n$ times the number of satellites is:
	\begin{equation}
	\sum_{\substack{p=1\\ p|n}}^{n} p.
	\end{equation}
\end{theorem}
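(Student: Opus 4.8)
The plan is to assemble the count directly from the three structural results already established, treating $p$ as the outer parameter and the configuration number $L_{M\Omega}'$ as the inner one. By Lemma~\ref{lemma:LO}, every final configuration compatible with the original one satisfies $L_\Omega' = p L_\Omega$ for some positive integer $p$, and by Lemma~\ref{lemma:LM} its number of satellites per orbit is then forced to be $L_M' = (n/p)L_M$. Hence, once $p$ is fixed, the triple $(L_\Omega', L_M', L_{M\Omega}')$ characterizing the final 2D Lattice Flower Constellation has exactly one remaining degree of freedom, namely $L_{M\Omega}'$. I would therefore write the total count as a sum over the admissible values of $p$, where for each $p$ the summand is the number of admissible values of $L_{M\Omega}'$.

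Next I would pin down which $p$ are admissible. Theorem~\ref{theorem:ij} shows that compatibility forces $p \mid n$; moreover $L_M' = (n/p)L_M$ must be a positive integer, so $n/p \ge 1$, i.e.\ $1 \le p \le n$. Thus $p$ ranges precisely over the divisors of $n$, exactly the index set of the stated sum. Conversely, each such divisor genuinely contributes: by Theorem~\ref{theorem:Nc}, $L_{M\Omega}'|_0 = (n/p)L_{M\Omega} \bmod L_\Omega$ is always a valid particular solution of the compatibility condition~\eqref{eq:Ncdiophantine}, so the family of compatible configurations for that $p$ is non-empty.

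Then I would invoke Theorem~\ref{theorem:Nc} for the inner count: for a fixed admissible $p$, the distinct admissible configuration numbers in the canonical range $\{0,\dots,L_\Omega'-1\}$ are exactly $L_{M\Omega}' = L_{M\Omega}'|_0 + C L_\Omega$ with $C \in \{0,\dots,p-1\}$, giving precisely $p$ distinct final constellations for that value of $p$. Finally I would observe there is no double counting across different $p$, since two configurations with different $p$ have different $L_\Omega' = p L_\Omega$ and hence are distinct. Summing the inner counts over all divisors of $n$ then yields $\sum_{p\mid n} p$, which is the claimed $\sum_{\substack{p=1\\ p\mid n}}^{n} p$.

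This is essentially bookkeeping on top of the earlier theorems, so I do not expect a genuine obstacle; the only point requiring care is not to conflate ``configuration numbers'' with ``configurations'' when $p$ varies, and to be explicit that the $p$ values produced for a fixed $p$ are both pairwise distinct and exhaustive — but that is precisely the content of Theorem~\ref{theorem:Nc}, which already shows $C \in \{0,\dots,p-1\}$ avoids duplicates and covers all compatible configuration numbers in the canonical range.
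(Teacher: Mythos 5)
Your proposal is correct and follows essentially the same route as the paper's proof: sum over the divisors $p$ of $n$ (admissibility from Theorem~\ref{theorem:ij}), with Theorem~\ref{theorem:Nc} supplying exactly $p$ admissible configuration numbers $L_{M\Omega}'$ for each such $p$. The extra care you take about non-emptiness for each divisor and the absence of double counting across different values of $p$ is a welcome tightening, but it does not change the argument.
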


\begin{proof}
	Theorem~\ref{theorem:ij} states that a solution exists if and only if $p|n$. Therefore, only the values of $p$ that are divisors of $n$ can generate compatible constellations. Additionally, and for a fixed compatible $p$, Theorem~\ref{theorem:Nc} states that $p$ different solutions always exists. Therefore, the total number of compatible configurations is the sum of all the combinations of $L_{M\Omega}'$ for each compatible value of $p$, that is:
	\begin{equation}
	\sum_{\substack{p=1\\ p|n}}^{n} p.
	\end{equation}
\end{proof}

\subsection{Relation with 2D Necklace Flower Constellations}

The expansion on number of satellites presented before can also be regarded as an application of a set of necklaces in the final distribution of the constellation. Particularly, two necklaces can be observed in the configuration, the first one applied in the right ascension of the ascending node ($\mathcal{G}_{\Omega}$), and the second one applied in the mean anomaly ($\mathcal{G}_M$). In other words, the final configuration of the constellation can be seen as the set of available positions for the constellation satellites, while the original distribution is the result of applying the necklaces into this set of available positions.

\begin{theorem}\label{theorem:necklaces}
	In the uniform expansion of a 2D Lattice Flower Constellation, and when compared to the original distribution, two necklaces are generated, one in the right ascension of the ascending node $\mathcal{G}_{\Omega}$ and the other in the mean anomaly $\mathcal{G}_{M}$. These neckaces are defined by:
	\begin{eqnarray}
	\mathcal{G}_{\Omega}(i) & = &  (p)i \mod(L_{\Omega}'), \nonumber \\
	\mathcal{G}_{M}(j) & = & \left(\displaystyle\frac{n}{p}\right) j \mod\left(L_{M}'\right).
	\end{eqnarray}
\end{theorem}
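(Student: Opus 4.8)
The plan is to recognize that the uniformly expanded constellation of Eq.~\eqref{eq:2dlfcfinal} plays exactly the role of the underlying 2D Lattice Flower Constellation that defines the set of \emph{available positions} in the 2D Necklace Flower Constellation formulation of Eqs.~\eqref{eq:necklace}--\eqref{eq:2dnfc}, while the $N_s$ satellites of the original constellation of Eq.~\eqref{eq:2dlfcoriginal}, indexed by $\{i,j\}$, form precisely the subset of those available positions selected by the necklaces. Under this identification the lattice parameters $\{L_{\Omega},L_{M},L_{M\Omega}\}$ appearing in Eq.~\eqref{eq:necklace} become $\{L_{\Omega}',L_{M}',L_{M\Omega}'\}$, the lattice indices become $\{i',j'\}$, and the necklace indices $\{i^*,j^*\}$ become $\{i,j\}$.

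First I would rewrite Eq.~\eqref{eq:necklace} under this dictionary, obtaining $i' = \mathcal{G}_{\Omega}(i)\bmod L_{\Omega}'$ and $j' = \mathcal{G}_{M}(j) + S_{M\Omega}\mathcal{G}_{\Omega}(i)\bmod L_{M}'$. Next I would invoke Theorem~\ref{theorem:ij}, which already furnishes the explicit position map $i'=p\,i$ and $j'=(n/p)\,j - B\,i \bmod (n/p)L_{M}$, where $B=(nL_{M\Omega}-pL_{M\Omega}')/(pL_{\Omega})$ is the integer fixed by the compatibility condition, and use Lemmas~\ref{lemma:LO} and~\ref{lemma:LM} to replace $(n/p)L_{M}$ by $L_{M}'$ and $pL_{\Omega}$ by $L_{\Omega}'$. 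Matching the right ascension equations is immediate and gives $\mathcal{G}_{\Omega}(i)=p\,i \bmod L_{\Omega}'$. For the mean anomaly equation, the part of $j'$ carrying the necklace index $j$ must be the necklace in mean anomaly, so $\mathcal{G}_{M}(j)=(n/p)\,j \bmod L_{M}'$, while the residual term $-B\,i$, which depends only on the orbit index, must be reproduced by the shifting contribution $S_{M\Omega}\mathcal{G}_{\Omega}(i)$.

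The only step requiring genuine care --- and hence the main obstacle --- is verifying that this last matching is consistent, i.e.\ that there is an admissible shifting parameter $S_{M\Omega}$ (one lying in $\{0,\dots,Sym(\mathcal{G}_M)-1\}$, and satisfying the divisibility condition~\eqref{eq:shiftingcondition} if a congruent distribution is desired) for which $S_{M\Omega}\,\mathcal{G}_{\Omega}(i) \equiv -B\,i \pmod{L_{M}'}$ for every orbit index $i$ of the original constellation. Using $\mathcal{G}_{\Omega}(i)=p\,i$ this collapses to the single linear congruence $S_{M\Omega}\,p \equiv -B \pmod{L_{M}'}$, whose solvability and choice of representative I would settle by feeding in the compatibility condition $\tfrac{n}{p}L_{M\Omega}-L_{M\Omega}'=BL_{\Omega}$ from Theorem~\ref{theorem:ij}; the key point is that $B$ is already pinned down by that condition, so the necklace $\mathcal{G}_{M}$ itself does not depend on which compatible $L_{M\Omega}'$ (equivalently, which value of $C$ in Theorem~\ref{theorem:Nc}) is selected, which is exactly why it takes the clean form stated. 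Beyond this, the argument is pure substitution together with the modular bookkeeping already established in the Preliminaries; as a sanity check I would verify that $\mathcal{G}_{\Omega}$ and $\mathcal{G}_{M}$ are bona fide necklaces, namely uniform decimations selecting $L_{\Omega}$ of the $L_{\Omega}'=pL_{\Omega}$ planes with step $p$ and $L_{M}$ of the $L_{M}'=(n/p)L_{M}$ slots with step $n/p$.
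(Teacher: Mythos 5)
Your proposal follows essentially the same route as the paper: identify the expanded lattice of Eq.~\eqref{eq:2dlfcfinal} as the set of available positions, apply Eq.~\eqref{eq:necklace} with the primed parameters, and read off the necklaces from the position map of Theorem~\ref{theorem:ij} using the fact that each necklace depends only on its own index. The consistency of absorbing the residual $i$-dependent term into $S_{M\Omega}\mathcal{G}_{\Omega}(i)$, which you rightly flag as the only delicate step, is not verified in the paper's proof of this theorem but is deferred to Theorem~\ref{theorem:2dnfc_relation}.
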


\begin{proof}
	The original constellation can be defined through the application of the 2D Necklace Flower Constellation formulation from Eq.~\eqref{eq:necklace} into the final distribution of satellites:
	\begin{eqnarray}
	i' & = & \mathcal{G}_{\Omega}(i) \mod(L_{\Omega}'), \nonumber \\
	j' & = & \mathcal{G}_{M}(j) + S\mathcal{G}_{\Omega}(i) \mod(L_{M}').
	\end{eqnarray} 
	By construction~\cite{ndnfc}, necklaces only depend on the distribution variable related to their dimension, that is, $i$ in the case of $\mathcal{G}_{\Omega}$, and $j$ in the case of $\mathcal{G}_{M}$. Therefore, these expressions can be related with the result from Theorem~\ref{theorem:ij} to obtain the equivalent distribution in the necklaces:
	\begin{eqnarray}
	\mathcal{G}_{\Omega}(i) & = &  (p)i \mod(L_{\Omega}'), \nonumber \\
	\mathcal{G}_{M}(j) & = & \left(\displaystyle\frac{n}{p}\right) j \mod\left(L_{M}'\right).
	\end{eqnarray}
\end{proof}

\begin{theorem}\label{theorem:symmetries}
	The symmetry of the necklace for $\mathcal{G}_{\Omega}$ and $\mathcal{G}_{\Omega}$ are respectively:
	\begin{eqnarray}
	\text{Sym}\left(\mathcal{G}_{\Omega}\right) & = & p, \nonumber \\
	\text{Sym}\left(\mathcal{G}_{M}\right) & = &  \displaystyle\frac{n}{p}.
	\end{eqnarray}
\end{theorem}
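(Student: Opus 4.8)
The plan is to identify each necklace with an explicit subset of residue classes and then read off its minimal translational symmetry directly. Recall that, by definition, $\mathrm{Sym}(\mathcal{G})$ is the smallest positive integer $s$ such that $\mathcal{G}\equiv\mathcal{G}+s$, i.e. such that translating every position of the necklace by $s$ reproduces the same set of positions modulo the relevant number of available slots, and that any such $s$ must divide that number of slots. So the task reduces to (i) writing $\mathcal{G}_{\Omega}$ and $\mathcal{G}_{M}$ as sets, (ii) exhibiting a translation that preserves each set, and (iii) ruling out any strictly smaller positive translation.

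For the first necklace, combining Theorem~\ref{theorem:necklaces} with Lemma~\ref{lemma:LO} gives $\mathcal{G}_{\Omega}(i)=pi\bmod(pL_{\Omega})$ for $i\in\{1,\dots,L_{\Omega}\}$, so as a set $\mathcal{G}_{\Omega}=\{\,0,p,2p,\dots,(L_{\Omega}-1)p\,\}$, which is precisely the set of all multiples of $p$ in $\mathbb{Z}/(pL_{\Omega})\mathbb{Z}$. This set is invariant under the shift $s=p$ (it merely permutes its elements cyclically), and $p$ divides $L_{\Omega}'=pL_{\Omega}$, so $p$ is an admissible symmetry. It is the minimal one: since $0\in\mathcal{G}_{\Omega}$, any shift $s$ with $0<s<p$ would force $s\in\mathcal{G}_{\Omega}$, contradicting that every element of $\mathcal{G}_{\Omega}$ is a multiple of $p$. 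Hence $\mathrm{Sym}(\mathcal{G}_{\Omega})=p$. The identical argument applied to $\mathcal{G}_{M}$, now using Lemma~\ref{lemma:LM} so that $L_{M}'=(n/p)L_{M}$, yields $\mathcal{G}_{M}=\{\,0,n/p,2n/p,\dots,(L_{M}-1)n/p\,\}$, the multiples of $n/p$ modulo $(n/p)L_{M}$; it is invariant under the shift $s=n/p$, which divides $L_{M}'$, and no smaller positive shift works because $0\in\mathcal{G}_{M}$. Therefore $\mathrm{Sym}(\mathcal{G}_{M})=n/p$.

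The only point requiring care is invoking the correct notion of necklace symmetry — the minimal positive translation fixing the position set, rather than the order of its symmetry group — and then verifying minimality; the fact that the residue $0$ belongs to each of the two necklaces (attained at $i=L_{\Omega}$ and $j=L_{M}$ respectively) makes the minimality step immediate, so I do not expect a genuine obstacle beyond this bookkeeping. As a consistency check one may also note that the resulting symmetries satisfy $p\mid L_{\Omega}'$ and $(n/p)\mid L_{M}'$ as required of any necklace symmetry, and that with $\mathrm{Sym}(\mathcal{G}_{M})=n/p$ the shifting condition of Eq.~\eqref{eq:shiftingcondition} is consistent with the relation between configuration numbers established in Theorem~\ref{theorem:ij}.
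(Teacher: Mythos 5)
Your proof is correct and follows essentially the same route as the paper's: both identify $\mathcal{G}_{\Omega}$ and $\mathcal{G}_{M}$ as the arithmetic progressions of multiples of $p$ and $n/p$ in $\mathbb{Z}/L_{\Omega}'\mathbb{Z}$ and $\mathbb{Z}/L_{M}'\mathbb{Z}$ respectively, and read the symmetry off as the common spacing (the paper obtains that spacing via $\gcd(p,L_{\Omega}')=p$ and $\gcd(n/p,L_{M}')=n/p$). Your minimality step --- using $0\in\mathcal{G}$ to force any admissible shift $s$ to itself be a multiple of $p$ (resp.\ $n/p$) --- is in fact a slightly more explicit justification than the paper's appeal to ``minimum step sizes,'' but it is the same argument in substance.
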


\begin{proof}
	 The symmetry of a necklace is defined as~\cite{ndnfc}:
	 \begin{equation}
	 \label{eq:ndsymmetry}
	 Sym(\mathcal{G}) = \min\left\{1 \leq r \leq L : \mathcal{G} + r \equiv \mathcal{G}\right\},
	 \end{equation}
	 where $\equiv$ represents equivalency between necklace distributions, and $L$ is the maximum number of available positions in the dimension where the necklace is defined. Let $i^{\mathcal{G}}$ and $j^{\mathcal{G}}$ be the available positions inside the necklaces, where $i^{\mathcal{G}}\in\{1,\dots,L_{\Omega}'\}$ and $j^{\mathcal{G}}\in\{1,\dots,L_{M}'\}$ respectively. Therefore, using Theorem~\ref{theorem:necklaces}:
	 \begin{eqnarray}
	 i^{\mathcal{G}} & = &  (p)i \mod(L_{\Omega}'), \nonumber \\
	 j^{\mathcal{G}} & = & \left(\displaystyle\frac{n}{p}\right) j \mod\left(L_{M}'\right).
	 \end{eqnarray}
	 which can be rewritten as a couple of Diophantine equations in the form:
	 \begin{eqnarray}
	 i^{\mathcal{G}} & = &  (p)i + D L_{\Omega}' = (p)i + D p L_{\Omega} = (p)(i +  D L_{\Omega}), \nonumber \\
	 j^{\mathcal{G}} & = & \left(\displaystyle\frac{n}{p}\right) j + E L_{M}' = \left(\frac{n}{p}\right) j + E\left(\frac{n}{p}\right)L_{M} \nonumber \\
	 & = & \left(\frac{n}{p}\right) \left(j + E L_{M}\right).
	 \end{eqnarray}
	 where $D$ and $E$ are two unknown integer numbers. From these expressions it is possible to derive the minimum step between two consecutive occupied positions, $\gcd(p,L_{\Omega}') = p\gcd(1,L_{\Omega}) = p$ for the case of $i^{\mathcal{G}}$, and  $\gcd(n/p,L_{M}') = (n/p)\gcd(1,L_{M}) = n/p$ for the case of $j^{\mathcal{G}}$. Since these are also the minimum step sizes produced by the movements in $i$ and $j$ respectively, this means that $p$ and $n/p$ produce the minimum rotation in order to obtain an equivalent distribution. Therefore, $\text{Sym}\left(\mathcal{G}_{\Omega}\right) = p$ and $\text{Sym}\left(\mathcal{G}_{M}\right) = n/p$.
\end{proof}

\begin{corollary}
	The final distribution from Theorem~\ref{theorem:ij} can be expressed in necklace notation with the result from Theorem~\ref{theorem:necklaces} as:
	\begin{eqnarray}
	i' & = & \mathcal{G}_{\Omega}(i) \mod\left(L_{\Omega}'\right); \nonumber \\
	j' & = & \mathcal{G}_{M}(j) - \left(\frac{(n/p)L_{M\Omega} - L_{M\Omega}'}{L_{\Omega}'}\right)\mathcal{G}_{\Omega}(i) \mod\left(L_{M}'\right).
	\end{eqnarray}
\end{corollary}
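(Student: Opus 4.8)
The plan is to obtain the corollary as a direct reformulation of Theorem~\ref{theorem:ij}, feeding in the two lemmas and Theorem~\ref{theorem:necklaces}. First I would take the pair $i' = (p)i$ and $j' = (n/p)j - \frac{nL_{M\Omega} - pL_{M\Omega}'}{pL_{\Omega}}\, i \pmod{(n/p)L_M}$ from Theorem~\ref{theorem:ij} and rewrite the ``primed'' lattice sizes using Lemmas~\ref{lemma:LO} and~\ref{lemma:LM}: the modulus $(n/p)L_M$ is exactly $L_{M}'$, and the denominator $pL_{\Omega}$ is exactly $L_{\Omega}'$. After these replacements the two relations already have the structure of the necklace relation~\eqref{eq:necklace}, except that the necklaces appear written out as the explicit linear maps $i\mapsto(p)i$ and $j\mapsto(n/p)j$.

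The second step is to absorb the factor $p$ into $\mathcal{G}_{\Omega}$. Using $L_{\Omega}' = pL_{\Omega}$, the coefficient of $i$ can be rewritten as $\frac{nL_{M\Omega} - pL_{M\Omega}'}{pL_{\Omega}}\, i = \frac{(n/p)L_{M\Omega} - L_{M\Omega}'}{L_{\Omega}'}\,(p)i$, and by Theorem~\ref{theorem:necklaces} we have $(p)i \equiv \mathcal{G}_{\Omega}(i)\pmod{L_{\Omega}'}$ and $(n/p)j \equiv \mathcal{G}_{M}(j)\pmod{L_{M}'}$. Substituting $\mathcal{G}_{\Omega}(i)$ for $(p)i$ and $\mathcal{G}_{M}(j)$ for $(n/p)j$ in the expression for $j'$, and $\mathcal{G}_{\Omega}(i)$ for $(p)i$ in the expression for $i'$, produces exactly the two claimed identities, with the role of the shifting parameter in~\eqref{eq:necklace} played by $-\big((n/p)L_{M\Omega} - L_{M\Omega}'\big)/L_{\Omega}'$.

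The only point that requires genuine care — and where I expect the sole obstacle — is verifying that these substitutions are consistent with the outer reduction modulo $L_{M}'$. Replacing $(n/p)j$ by $\mathcal{G}_{M}(j)$ alters the argument of the $j'$ formula by an integer multiple of $L_{M}'$, hence is harmless. Replacing $(p)i$ by $\mathcal{G}_{\Omega}(i)$ alters it by $\frac{(n/p)L_{M\Omega} - L_{M\Omega}'}{L_{\Omega}'}$ times a multiple of $L_{\Omega}'$, i.e.\ by an integer multiple of $(n/p)L_{M\Omega} - L_{M\Omega}'$, which by the compatibility condition of Theorem~\ref{theorem:ij} equals $BL_{\Omega}$ with $B$ an integer; moreover, over the relevant index range $i\in\{1,\dots,L_{\Omega}\}$ one has $(p)i\le pL_{\Omega}=L_{\Omega}'$, so the reduction $(p)i\bmod L_{\Omega}'$ is in fact the identity and no correction term appears at all. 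Finally, even though the written coefficient $\big((n/p)L_{M\Omega} - L_{M\Omega}'\big)/L_{\Omega}'$ need not itself be an integer, its product with $\mathcal{G}_{\Omega}(i)=(p)i$ equals $Bi$ and is therefore always an integer, so the stated formula is well posed; collecting these observations completes the argument.
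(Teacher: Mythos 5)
Your proposal is correct, and since the paper states this corollary without any proof (treating it as an immediate restatement of Theorem~\ref{theorem:ij} via Lemmas~\ref{lemma:LO} and~\ref{lemma:LM} and the necklace definitions of Theorem~\ref{theorem:necklaces}), your direct-substitution argument is exactly the intended derivation. Your added care about consistency with the outer reduction modulo $L_{M}'$ and about the non-integer coefficient $\bigl((n/p)L_{M\Omega}-L_{M\Omega}'\bigr)/L_{\Omega}'$ multiplying $\mathcal{G}_{\Omega}(i)$ is welcome detail that the paper leaves implicit.
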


\begin{theorem}\label{theorem:2dnfc_relation}
	The equivalent shifting parameter ($S_{M\Omega}'$) of the 2D Necklace Flower Constellation that relates the original and final distributions fulfills is unique for each combination of $p$ and $L_{M\Omega}'$ and fulfills the following relation:
	\begin{equation}
	\left.\displaystyle\frac{n}{p}\right|S_{M\Omega}'L_{\Omega}' - L_{M\Omega}'.
	\end{equation}
\end{theorem}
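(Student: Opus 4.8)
The plan is to read the relation between the original and the final constellations as a concrete instance of the 2D Necklace Flower Constellation construction of Eqs.~\eqref{eq:necklace}--\eqref{eq:2dnfc}, in which the \emph{final} distribution plays the role of the underlying 2D Lattice Flower Constellation (parameters $L_{\Omega}'$, $L_{M}'$, $L_{M\Omega}'$) and the \emph{original} distribution is the set of positions picked out by the necklaces $\mathcal{G}_{\Omega}$, $\mathcal{G}_{M}$ of Theorem~\ref{theorem:necklaces}. The equivalent shifting parameter $S_{M\Omega}'$ is then, by definition, the value of $S_{M\Omega}$ in Eq.~\eqref{eq:necklace} that makes this Necklace construction reproduce the original constellation.

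For the divisibility relation, the argument is that, since both the original and the final constellations are 2D Lattice Flower Constellations and the induced necklaces $\mathcal{G}_{\Omega}$, $\mathcal{G}_{M}$ of Theorem~\ref{theorem:necklaces} are maximally symmetric arithmetic necklaces, the resulting Necklace configuration is congruent. Hence the shifting parameter must satisfy the congruence condition~\eqref{eq:shiftingcondition} written with the underlying Lattice parameters, that is, $Sym(\mathcal{G}_{M}) \mid S_{M\Omega}'L_{\Omega}' - L_{M\Omega}'$; substituting $Sym(\mathcal{G}_{M}) = n/p$ from Theorem~\ref{theorem:symmetries} gives the claimed relation $\frac{n}{p} \mid S_{M\Omega}'L_{\Omega}' - L_{M\Omega}'$. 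An equivalent and more computational route, which I would probably write out instead, is to match the coefficient of $\mathcal{G}_{\Omega}(i)$ in the corollary following Theorem~\ref{theorem:ij} with Eq.~\eqref{eq:necklace} and reduce the resulting congruence modulo $n/p$, using $(n/p)L_{M\Omega} \equiv 0 \pmod{n/p}$.

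For uniqueness, the idea is that $S_{M\Omega}'$ enters the Necklace construction only through the term $S_{M\Omega}'\mathcal{G}_{\Omega}(i) \bmod L_{M}'$, and replacing $S_{M\Omega}'$ by $S_{M\Omega}' + Sym(\mathcal{G}_{M})$ leaves the necklace $\mathcal{G}_{M}$, hence the whole configuration, invariant. Therefore any two admissible shifting parameters differ by a multiple of $Sym(\mathcal{G}_{M}) = n/p$, and the convention $S_{M\Omega}' \in \{0,\dots,Sym(\mathcal{G}_{M})-1\}$ recalled in the preliminaries selects exactly one representative for each admissible pair $(p, L_{M\Omega}')$. Existence of such a value is already supplied by the corollary, which exhibits the original distribution as a Necklace construction from the final one.

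I expect the main obstacle to be the consistency check inside this last step: the per-satellite coefficient that the corollary attaches to $\mathcal{G}_{\Omega}(i)$ is $\big((n/p)L_{M\Omega} - L_{M\Omega}'\big)/L_{\Omega}'$, which need not be an integer and so is not literally a shifting parameter in the sense of Eq.~\eqref{eq:necklace}. The resolution I have in mind is that only the product of that coefficient with $\mathcal{G}_{\Omega}(i)$ matters, that this product is always integral because $\mathcal{G}_{\Omega}(i) = (p)i \bmod L_{\Omega}'$ is a multiple of $p$, and that the compatibility condition of Theorem~\ref{theorem:ij} together with the periodicity $Sym(\mathcal{G}_{M}) = n/p$ makes the induced congruence for $S_{M\Omega}'$ solvable, so that a single integer $S_{M\Omega}' \in \{0,\dots,n/p-1\}$ reproduces the original constellation for every satellite simultaneously.
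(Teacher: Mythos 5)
Your derivation of the divisibility relation is essentially the paper's own: you invoke the congruence condition $Sym(\mathcal{G}_M) \mid S_{M\Omega}'L_{\Omega}' - L_{M\Omega}'$ for a congruent necklace configuration and substitute $Sym(\mathcal{G}_M)=n/p$ from Theorem~\ref{theorem:symmetries}. Your side remark about the coefficient $\bigl((n/p)L_{M\Omega}-L_{M\Omega}'\bigr)/L_{\Omega}'=B/p$ in the corollary not being an integer, and its repair via $\mathcal{G}_{\Omega}(i)$ always being a multiple of $p$, is a genuine subtlety that the paper does not spell out; that part is a useful addition.

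For uniqueness you take a different route from the paper, and as written it has a gap. You show that replacing $S_{M\Omega}'$ by $S_{M\Omega}'+n/p$ leaves the selected configuration invariant, and from this you conclude that ``any two admissible shifting parameters differ by a multiple of $n/p$.'' That conclusion is the \emph{converse} of what you proved: you still need that two values of $S_{M\Omega}'$ which are not congruent modulo $n/p$ produce \emph{different} configurations. Writing the positions selected in plane $i$ as the coset $\{(n/p)j + S_{M\Omega}'\,p\,i \bmod L_M'\}$, two shifts $S$ and $\tilde S$ select the same positions for every $i$ if and only if $(n/p)\mid (S-\tilde S)\,p$, i.e.\ if and only if $S\equiv\tilde S \pmod{(n/p)/\gcd(p,n/p)}$. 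So whenever $\gcd(p,n/p)>1$ the injectivity you need on $\{0,\dots,n/p-1\}$ is not automatic and must be argued separately (or ``unique'' must be read as uniqueness of the resulting configuration rather than of the integer). The paper handles uniqueness quite differently: it rewrites the divisibility as the Diophantine equation $F\,(n/p)=S_{M\Omega}'L_{\Omega}'-L_{M\Omega}'$ and counts the admissible values of $F$, $S_{M\Omega}'$ and $L_{M\Omega}'$ within their canonical ranges to argue that each compatible $L_{M\Omega}'$ determines a single pair $\{F,S_{M\Omega}'\}$. Your symmetry-based idea is cleaner and more structural than that counting argument, but to be a complete proof it must close the missing direction rather than assert it.
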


\begin{proof}
	The condition that the shifting parameter ($S_{M\Omega}'$) has to fulfill in order for the distribution to be congruent is~\cite{ndnfc}:
	\begin{equation}
	Sym(\mathcal{G}_M) \mid S_{M\Omega}L_{\Omega} - L_{M\Omega},
	\end{equation}
	and from Theorem~\ref{theorem:symmetries} we know that the symmetry of $\mathcal{G}_M = n/p$, thus:
	\begin{equation}
	\left.\displaystyle\frac{n}{p}\right|S_{M\Omega}'L_{\Omega}' - L_{M\Omega}'.
	\end{equation}
	The former expression can be rewritten as a Diophantine equation in the form:
	\begin{equation}
	F\displaystyle\frac{n}{p} = S_{M\Omega}'L_{\Omega}' - L_{M\Omega}'.
	\end{equation}
	where $F$ is an unknown integer number. Since $\gcd(L_{\Omega}',1) = 1$, this Diophantine equation has always solution in $S_{M\Omega}'$ and $L_{M\Omega}'$, particularly:
	\begin{eqnarray}
	S_{M\Omega}' & = & S_{M\Omega}'|_0 + \lambda_1; \nonumber \\
	L_{M\Omega}' & = & L_{M\Omega}'|_0 + \lambda_1 L_{\Omega}';
	\end{eqnarray}
	where $S_{M\Omega}'|_0$ and $L_{M\Omega}'|_0$ are particular solutions of the equation and $\lambda_1$ is an integer number that allows to generate all possible combinations of the solution. Since $L_{M\Omega}'$ is defined in $L_{M\Omega}'\in\{0,\dots,L_{\Omega}-1\}$ to avoid duplicate configuration definitions, the former system of equations only provides one solution for each value of $F$ from the Diophantine equation. This means that there is only one compatible pair $\{S_{M\Omega}', L_{M\Omega}'\}$ for each value of $F$. 
	
	On the other hand, due to the definition of $S_{M\Omega}'\in\{0,\dots,(p/n)-1\}$ and $L_{M\Omega}'\in\{0,\dots,L_{\Omega}'-1\}$, the quantity $(n/p)F$ can only vary in $(n/p)F\in\{1-L_{\Omega}',\dots, (n/p-1)L_{\Omega}'\}$ in the same Diophantine equation, so, there is only $L_{\Omega}'$ possible different values that $F$ can take. In fact, by taking $F$ and $L_{M\Omega}'$ as the variables of the Diophantine equation for a given value of $S_{M\Omega}'$:
	\begin{eqnarray}
	F & = & F|_0 - \lambda_2; \nonumber \\
	L_{M\Omega}' & = & L_{M\Omega}'|_0 + \lambda_2 \displaystyle\frac{n}{p};
	\end{eqnarray}
	and since $F$ has potentially only $L_{\Omega}'$ possible values, $\lambda_2\in\{0,\dots,L_{\Omega}'-1\}$ and thus, each different value of $L_{M\Omega}'$ generates a different value of $F$ (note that the opposite is not true in general). Therefore, there is only one possible value of $F$ that relates to a pair of solutions $\{S_{M\Omega}', L_{M\Omega}'\}$, and thus, there is only one possible combination of $\{F, S_{M\Omega}'\}$ for each compatible value of $L_{M\Omega}'$. This means that $S_{M\Omega}'$ is unique for a given combination of $p$ and $L_{M\Omega}'$.
\end{proof} 

Note that the result provided by the former theorem is coherent with Theorems~\ref{theorem:Nc} and~\ref{theorem:count1} regarding of number of different compatible configurations on the expansion of the constellation. Moreover, this result allows to connect the 2D Necklace Flower Constellation formulation with the problem presented in this section. 

Another interesting problem to study is the one in which we are interested in one particular final constellation with $N_s'$ satellites where we want to find all possible original satellite distributions that allow to generate this final constellation by the process presented in this section. This means that the values of $L_{\Omega}'$, $L_{M}'$, and $L_{M\Omega}'$ are now fixed, being the parameters $n$ and $p$ the ones determining $L_{\Omega}$ and $L_{M}$ in the original distribution.

\subsection{Non-uniform expansions} \label{sec:non_uniform}

There are cases where it is more important to increase the minimum distance between satellites than providing a completely uniform distribution for the satellites in the final configuration. In general, performing an expansion on number of satellites using an completely uniform approach (as the one provided by the 2D Lattice Flower Constellations) does not provide the best results from the minimum distance between satellites perspective if the original positions are maintained. This is particularly true when the number of satellites in the final constellation or slotting architecture is very large. This effect is produced by the number of constraints that are introduced in the problem when we fix the original relative distribution of the space architecture. An example of this can be seen in the result from Theorem~\ref{theorem:count1}. Therefore, it is of interest to study how to approach the problem in case we allow including non-uniformities in the expansion of satellites.

In general, having to distribute $(N_s'-N_s)$ new satellites over the distribution requires to perform a multidimensional optimization of that amount of spacecraft, being the position of each one dependent on all the remaining satellites. This implies that the optimization can be very computationally expensive, more so when dealing with megaconstellations or slotting architectures. Fortunately, it is possible to use the symmetries from 2D Lattice Flower Constellations to obtain optimal distributions under these conditions while significantly reducing the searching space.

In a 2D Lattice Flower Constellation, there are a set of rotations of the configuration space that generate equivalent satellite distributions~\cite{ndnfc}. Particularly, if we add a first satellite to the original distribution by positioning it at coordinates $\{\Delta\Omega = \Omega_1, \Delta M = M_1\}$ and check that the minimum distance with any other satellite of the constellation is $d_1$, that means that we can position a second satellite in the coordinates:
\begin{eqnarray} \label{eq:non_uniform}
\Delta\Omega & = & \Omega_1 + \displaystyle\frac{k_1}{L_{\Omega}}; \nonumber \\
\Delta M & = & M_1 + \displaystyle\frac{2\pi}{L_{M}}\Big(k_2 - \frac{L_{M\Omega}}{L_{\Omega}}k_1\Big);
\end{eqnarray}
where $k_1$ and $k_2$ are two arbitrary integers, and still maintain exactly the same minimum distance $d_1$ with all the satellites of the constellation. This is due to the fact that each pair combination to check minimum distances between the first satellite an each of the spacecraft in the original distribution is already represented by Eq.~\eqref{eq:non_uniform}. In that sense, note that this distribution follows the same pattern as the original 2D Lattice Flower Constellation. Particularly, each pair between the first satellite and all the original satellites can be defined as:
\begin{eqnarray}\label{eq:non_uniform_1}
\Delta\Omega & = & \Omega_1 + \displaystyle\frac{i}{L_{\Omega}}; \nonumber \\
\Delta M & = & M_1 + \displaystyle\frac{2\pi}{L_{M}}\Big(j - \frac{L_{M\Omega}}{L_{\Omega}}i\Big).
\end{eqnarray}
On the other hand, the pairs defined for the second satellite are:
\begin{eqnarray}
\Delta\Omega & = & \Omega_1 + \displaystyle\frac{i}{L_{\Omega}} + \displaystyle\frac{k_1}{L_{\Omega}}; \nonumber \\
\Delta M & = & M_1 + \displaystyle\frac{2\pi}{L_{M}}\Big(j - \frac{L_{M\Omega}}{L_{\Omega}}i\Big) + \displaystyle\frac{2\pi}{L_{M}}\Big(k_2 - \frac{L_{M\Omega}}{L_{\Omega}}k_1\Big),
\end{eqnarray}
or by rearranging the equation:
\begin{eqnarray}
\Delta\Omega & = & \Omega_1 + \displaystyle\frac{i + k_1}{L_{\Omega}}; \nonumber \\
\Delta M & = & M_1 + \displaystyle\frac{2\pi}{L_{M}}\Big(j + k_2 - \frac{L_{M\Omega}}{L_{\Omega}}(i+k_1)\Big),
\end{eqnarray}
which represents exactly the same pairs from the first satellite in Eq.~\eqref{eq:non_uniform_1} but with a different ordering. Therefore, the minimum distance of the second satellite with any of the satellites from the original constellation is also $d_1$. Additionally, we also know the minimum distance between the two additional satellites. Since the additional satellite also follow the same 2D Lattice Flower Constellation distribution, this minimum distance corresponds to the overall minimum of the original constellation. 

By following this procedure, it is possible to add any number of satellites up to $N_s$, effectively doubling the satellites or slots of the final distribution. In addition, this process can be repeated as many times as required adding each time a maximum of $N_s$ satellites. This is effectively combining several 2D Lattice Flower Constellations with the same distribution parameters into the same configuration, each one with a different reference origin.

However, there is still the question of how to define these reference origins for the 2D Lattice Flower Constellations, that is, the points $\{\Delta\Omega = \Omega_1, \Delta M = M_1\}$. These can be generated by performing an optimization in the subdomain of the space. Particularly, we know from the properties of Lattice Flower Constellations~\cite{ndnfc} and from Eq.~\eqref{eq:non_uniform} that the formulation is effectively propagating a given pattern of the space defined in $\{\Delta\Omega\in[0,2\pi/L_{\Omega}), \Delta M\in[0,2\pi/L_{M})\}$ uniformly over the whole configuration. Therefore, this means that if the final distribution is comprised by $N_s'=nN_s$ satellites, the optimization will require studying only $2(n-1)$ dimensions in this subdomain.

\subsection{Examples of application}

In this subsection a series of examples are presented with the goal of showing an application of the former theorems and lemmas, and their potential uses. To that end, we consider both the cases of uniform and non-uniform architecture expansions proposed previously. These examples of application are presented in the following subsections respectively.

\subsubsection{Examples of expansion using uniform distributions}

Let an initial constellation configuration be defined by the 2D Lattice Flower Constellation parameters $L_{\Omega} = 3$, $L_{M} = 9$, and $L_{M\Omega} = 2$, a semi-major axis of $a = 29 600.137$ m and an inclination of $inc = 56^{\circ}$. This correspond to the constellation initial design parameters of the Galileo constellation. The objective is to add up to two times the number of original satellites of the constellation while maintaining the satellite positions from the original configuration. 

\begin{table}[!h]
	\centering	
	\caption{Galileo compatible 2D Lattice Flower Constellations.}
	{
		\begin{tabular}{|c|c|c|c|c|}
			\hline
			\bf{Configuration} & $\mathbf{p}$ & $\mathbf{L_{\Omega}'}$ & $\mathbf{L_{M}'}$ & $\mathbf{L_{M\Omega}'}$ \\
			\hline
			1 & 1 & 3 & 27 & 6 \\
			\hline
			2 & 3 & 9 & 9 & 2 \\
			\hline
			3 & 3 & 9 & 9 & 5 \\
			\hline
			4 & 3 & 9 & 9 & 8 \\
			\hline
		\end{tabular}
	}
	\label{tab:uniform}
\end{table}

\begin{figure}[!h]
	\centering
	{\includegraphics[width=0.9\textwidth]{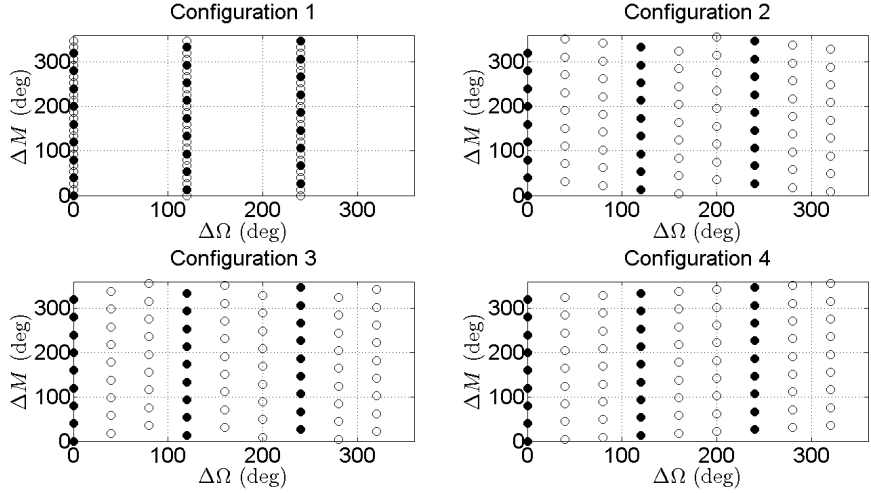}} 
	\caption{$(\Omega-M)$-space representation of the Galileo compatible 2D Lattice Flower Constellations.}
	\label{fig:uniform_compatible}
\end{figure}

Therefore, the final constellation will consist on $N_s' = 3N_s = 3L_{\Omega}L_{M} = 81$ satellites, where $n = 3$ is the number of times the constellation is expanded. This means that, based on Theorem~\ref{theorem:count1}, there are 4 compatible configurations that are completely uniform: one distributed in 3 orbital planes ($p=1$), and other 3 distributed in 9 orbital planes ($p=3$). These compatible constellations can be obtained using Theorem~\ref{theorem:Nc}, which leads to the configurations shown in Table~\ref{tab:uniform}. Moreover, Figure~\ref{fig:uniform_compatible} shows the $(\Omega-M)$-space representation of these constellations, where the circles (both white and black) are the satellites of the final configuration, while the black circles represent the satellites from the original constellation. As can be seen, the original satellites are perfectly integrated in the new structure.

\begin{table}[!h]
	\centering	
	\caption{Slotting compatible 2D Lattice Flower Constellations.}
	{
		\begin{tabular}{|c|c|c|c|c|c|}
			\hline
			\bf{Configuration} & $\mathbf{p}$ & $\mathbf{L_{\Omega}'}$ & $\mathbf{L_{M}'}$ & $\mathbf{L_{M\Omega}'}$ & $\alpha$ (deg) \\
			\hline
			1 & 1 & 246 & 14 & 202 & 0.000 \\
			\hline
			2 & 2 & 492 & 7 & 224 & 0.017 \\
			\hline
			3 & 2 & 492 & 7 & 470 & 0.304 \\
			\hline
		\end{tabular}
	}
	\label{tab:uniform_mega}
\end{table}

The advantage of using this kind of formulation is that it allows to define and study constellations with a large number of satellites using the same formalism. For instance, Ref.~\cite{stm} proposed the use of 2D Lattice Flower Constellations to define slotting architectures that maximize the minimum distance between satellites. Particularly, at $60^{\circ}$ in inclination, the 2D Lattice Flower Constellation with the largest number of spacecraft that maintains a minimum distance between satellites of at least $1^{\circ}$ is represented by $L_{\Omega} = 246$, $L_{M} = 7$, $L_{M\Omega} = 224$. The goal this time is to double the number of slots while maintaining the original positions of this slotting architecture. Table~\ref{tab:uniform_mega} shows the compatible solutions with their respective minimum distances between satellites ($\alpha$). As can be seen, one of the solutions generates a systemic conjunction (configuration 1) as a result of the distribution parameters $L_{\Omega}$, and $(L_{M} + L_{M\Omega})$ being both even numbers~\cite{fcefficient}. On the other hand, configuration 3 ($L_{\Omega}' = 492$, $L_{M}' = 7$, $L_{M\Omega}' = 470$) provides the best performance under these constraints with a minimum distance between satellites of $0.304^{\circ}$, a $30\%$ of the minimum distance of the original slotting architecture.

Figure~\ref{fig:No492_Nso7_Nc470} shows the $(\Omega-M)$-space representation of configuration 3 from Table~\ref{tab:uniform_mega}, where the black circles represent the original slots in the configuration while the white ones are the slots that have been added to the structure. This figure clearly shows that the additional slots follow the same pattern as in the original slotting architecture, not modifying the overall distribution of the structure. 

\begin{figure}[!h]
	\centering
	{\includegraphics[width=0.9\textwidth]{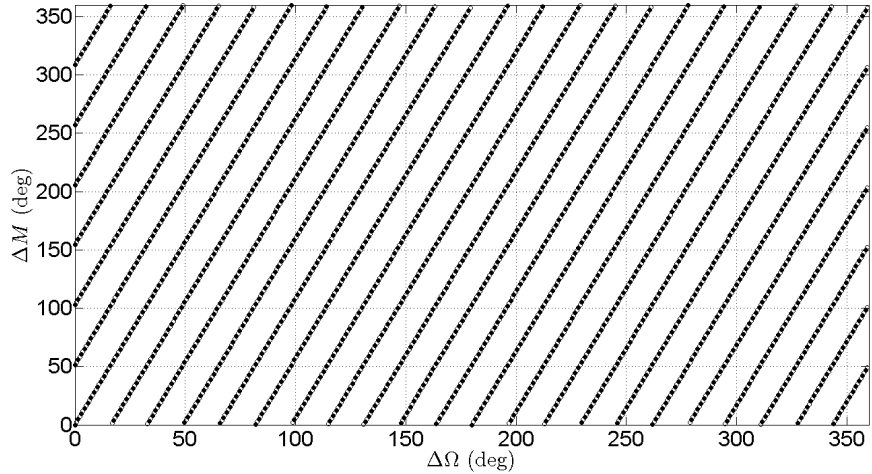}} 
	\caption{$(\Omega-M)$-space representation of the uniform compatible 2D Lattice Flower Constellation.}
	\label{fig:No492_Nso7_Nc470}
\end{figure}

This problem can also be inverted. Instead of fixing the smaller slotting architecture, we can fix the final distribution containing more slots. This situation can represent, for instance, the progressive building of an optimized large constellation in different launches. This means that the problem to solve changes to find all the possible initial configurations having half the number of slots than the final distribution that are compatible with the positions defined in the final slotting architecture. This problem is, in fact, the necklace problem~\cite{2dnfc}.

The final slotting architecture is defined as before by $L_{\Omega}' = 492$, $L_{M}' = 7$ and $L_{M\Omega}' = 470$. If the number of slots in the original distribution is half the ones in the final architecture, this means that $n = 2$, and that $p =\{1,2\}$. Then, from Theorem~\ref{theorem:symmetries} we can obtain the symmetries of the necklaces both in the right ascension of the ascending node ($\mathcal{G}_{\Omega}$) and mean anomaly ($\mathcal{G}_{M}$) depending on the value of $p$ selected. However, the symmetry of a necklace has to be a divisor of the number of available positions by definition~\cite{2dnfc}. This means that when $p = 1$, $\mathcal{G}_{M} = 2$ which is not a divisor of $L_{M}' = 7$, and thus, $p = 1$ is not a valid solution. For $p = 2$ we have $\mathcal{G}_{\Omega} = 2$ and $\mathcal{G}_{M} = 1$, and applying Theorem~\ref{theorem:2dnfc_relation} we obtain a shifting parameter equal to $S_{M\Omega}' = 0$ since $S_{M\Omega}' \in \{0,\dots,Sym(\mathcal{G}_{M}) - 1\}$~\cite{2dnfc}. Therefore we only have one possible compatible configuration where the necklace structure is defined by Theorem~\ref{theorem:necklaces}. Moreover, it is possible to define the equivalent 2D Lattice Flower Constellation. Particularly, $L_{\Omega} = L_{\Omega}'/p = 246$, $L_{M} = L_{M}'p/n = 7$, and $L_{M\Omega}$ is obtain by solving the differential equation from Theorem~\ref{theorem:Nc}:
\begin{equation}
	L_{M\Omega}' = \displaystyle\frac{n}{p}L_{M\Omega} + CL_{\Omega} \longrightarrow 470 = L_{M\Omega} + 246C,
\end{equation}
whose solution is unique and equal to $L_{M\Omega} = 224$. As can be observed, this is exactly the configuration that was used in the previous example as the original configuration. Note also that if non uniform distributions are allowed for the original distribution, there is a larger number of possible combinations as presented in Refs.~\cite{2dnfc,nominal,maintenance}.

\subsubsection{Example of expansion using non-uniform distributions}
\label{sec:exnonuniform}

In this example of application we focus on the same original slotting architecture presented in the previous example, that is, a 2D Lattice Flower Constellation at 60 deg in inclination whose distribution parameters are $L_{\Omega} = 246$, $L_{M} = 7$, $L_{M\Omega} = 224$. The goal again is to generate a compatible slotting architecture with double the number of slots, that is, the positions of the original slots are unaltered but the size of these slots will change based on the new configuration.

In order to generate this slot reconfiguration we make use of the methodology proposed in Section~\ref{sec:non_uniform} to maximize the minimum distance with these new defined slots, and the formula proposed in Ref.~\cite{fcefficient} to compute the minimum distance between two satellites located in circular orbit at the same altitude. Particularly, the optimization is performed by a brute force approach where a grid is defined in the two dimensional range $\{\Delta\Omega\in [0,2\pi/L_{\Omega}],\Delta M \in [0,2\pi/L_{M}]\}$, which corresponds to the smallest pattern in a 2D Lattice Flower Constellation as seen in Section~\ref{sec:non_uniform}. Particularly, we selected a $500\times 5000$ grid since the range in right ascension of the ascending node is smaller than the one in mean anomaly. Then, for each point in the grid, the minimum distance with all the slots of the original distribution is computed, selecting the point with largest minimum distance value as the candidate to perform the slotting expansion. Figure~\ref{fig:map_mindis} shows the map of the minimum distances as a function of the position of the slot in the pattern, where the two red dots are the positions of the original slots. For this example, the point with better minimum distance found corresponds to $\Omega_1 = 1.2995^{\circ}$ and $M_1 = 50.2251^{\circ}$, having a minimum distance of $0.5536^{\circ}$. Therefore, if this pattern defined by the original slotting architecture and this additional slot is repeated over the whole constellation, it is possible to expand the original slotting architecture to $N_s' = 2\cdot 1722 = 3444$ slots while maintaining a minimum distance of $0.5536^{\circ}$. The representation of this slotting architecture can be seen in Figure~\ref{fig:non_uniform_compatible}, where the original slots are represented by black circles, and the new ones by black circles. As can be seen, the overall distribution is not uniform if considering particular slots, but is uniform if considered the minimum size pattern characteristic from the original 2D Lattice Flower Constellation.

\begin{figure}[!h]
	\centering
	{\includegraphics[width=\textwidth]{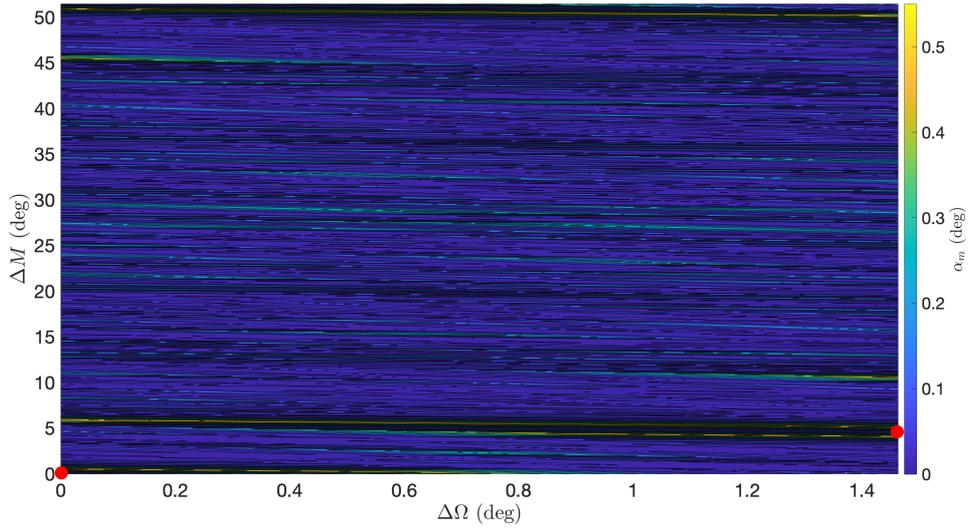}} 
	\caption{Map of minimum distances as a function of the new slot position.}
	\label{fig:map_mindis}
\end{figure}

\begin{figure}[!h]
	\centering
	{\includegraphics[width=0.9\textwidth]{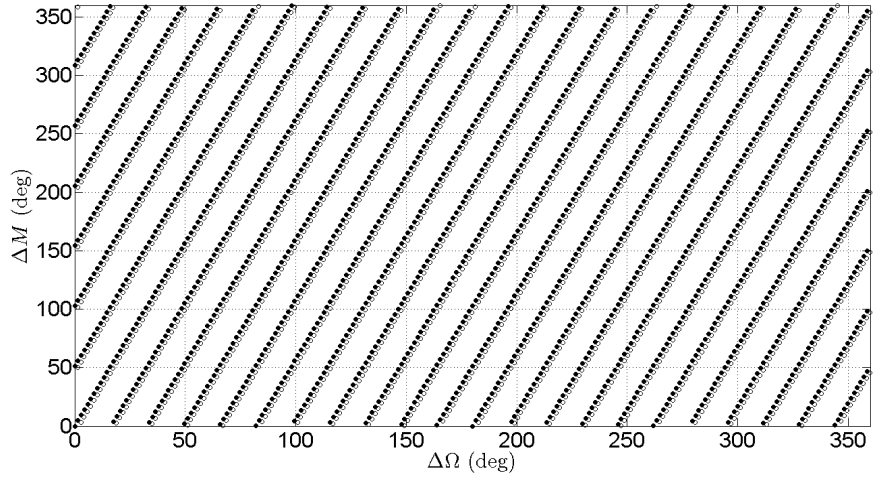}} 
	\caption{$(\Omega-M)$-space representation of the non-uniform compatible 2D Lattice Flower Constellations.}
	\label{fig:non_uniform_compatible}
\end{figure}

Compared to the uniform expansion seen before, this approach provides a larger minimum distance between satellites at the cost of losing the complete uniformity of the structure. In general, using uniform expansions implies imposing a large number of constraints into the final configuration. This can be clearly seen by the fact that the number of compatible configurations does not depend on the initial number of satellites as seen in Theorem~\ref{theorem:count1}. Nevertheless, and when considering other metrics different to minimum distance between satellites, uniform expansions can obtain much better performance. An example of this is global coverage problems as seen in Ref.~\cite{gdop}.


\section{Maintaining the orbital plane distribution}

This case of study takes into account that satellites may be able to move their positions inside their orbits, but the orbits themselves will be fixed. Note that even if the original orbits and satellites are maintained, new spacecraft can be located in other orbital planes. This can be seen as a particular case of the result of the previous section, where the condition in the mean anomaly of the satellites (and thus, the combination numbers) is relaxed.

\begin{theorem} \label{theorem:plane}
	The 2D Lattice Flower Constellations that maintain the original orbital plane distribution are defined by: $L_{\Omega}' = p L_{\Omega}$, $L_{M}' = (n/p) L_{M}$, and $L_{M\Omega}'\in\{0,1,\dots,L_{\Omega}'-1\}$, where $p|n$ is an integer number. The number of different 2D Lattice Flower Constellations under these conditions is:
	\begin{equation}
	L_{\Omega}\sum_{\substack{p=1\\ p|n}}^n p.
	\end{equation}
\end{theorem}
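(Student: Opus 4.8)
The plan is to treat this as a relaxation of Theorems~\ref{theorem:ij}--\ref{theorem:count1}: the orbital-plane condition is unchanged, so the structural parameters $L_{\Omega}'$ and $L_{M}'$ are pinned down exactly as before, while the mean-anomaly condition is dropped, so the combination number $L_{M\Omega}'$ becomes a free parameter. Concretely, since the orbital planes of the final constellation must still contain those of the original one, and the $\Omega$-distribution of a 2D Lattice Flower Constellation depends only on $L_{\Omega}'$, the argument of Lemma~\ref{lemma:LO} applies verbatim and yields $L_{\Omega}' = pL_{\Omega}$ for some positive integer $p$. Counting satellites via $N_s' = L_{\Omega}'L_{M}' = nN_s = nL_{\Omega}L_{M}$ together with $L_{\Omega}' = pL_{\Omega}$ then reproduces Lemma~\ref{lemma:LM}, namely $L_{M}' = (n/p)L_{M}$; requiring $L_{M}'$ to be an admissible (integer) number of satellites per orbit forces $p \mid n$.

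The core of the argument is to show that $L_{M\Omega}'$ is otherwise unconstrained. Because the original satellites are now permitted to reposition themselves inside their orbital planes, the mean-anomaly matching used in Theorem~\ref{theorem:ij} no longer needs to hold: for any final 2D Lattice Flower Constellation with $L_{\Omega}' = pL_{\Omega}$, the planes $i' = p, 2p, \dots, pL_{\Omega}$ are exactly the $L_{\Omega}$ planes of the original configuration, each offering $L_{M}' \geq L_{M}$ positions into which the original satellites can be redistributed, regardless of the value of $L_{M\Omega}'$. Hence every $L_{M\Omega}'$ produces a valid configuration, and by the no-duplicates convention one may take $L_{M\Omega}' \in \{0,1,\dots,L_{\Omega}'-1\}$, which is the claimed parametrization.

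Finally I would count. Constellations with different $p$ are automatically distinct since they have different $L_{\Omega}' = pL_{\Omega}$; for fixed $p$, the $L_{\Omega}' = pL_{\Omega}$ admissible values of $L_{M\Omega}'$ give $pL_{\Omega}$ distinct constellations, which is precisely what the range convention guarantees. Summing over the divisors $p$ of $n$ gives $\sum_{p \mid n} pL_{\Omega} = L_{\Omega}\sum_{p \mid n} p$. The step that needs care — and the only real obstacle — is the middle one: verifying that dropping the mean-anomaly condition genuinely removes every restriction on $L_{M\Omega}'$, and does not covertly reintroduce one through the requirement that the original satellites land on lattice positions of the final configuration, together with checking that no configuration is counted twice across different values of $p$.
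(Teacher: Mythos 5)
Your proposal is correct and follows essentially the same route as the paper's proof: it reuses Lemmas~\ref{lemma:LO} and~\ref{lemma:LM} to fix $L_{\Omega}' = pL_{\Omega}$ and $L_{M}' = (n/p)L_{M}$, observes that relaxing the mean-anomaly matching leaves $L_{M\Omega}'$ free over $\{0,1,\dots,L_{\Omega}'-1\}$, and sums $pL_{\Omega}$ over the divisors $p$ of $n$. The only looseness — shared with the paper, which simply imports $p\mid n$ from Theorem~\ref{theorem:ij} — is that integrality of $(n/p)L_{M}$ alone forces only $p\mid nL_{M}$, a restriction the paper itself revisits and relaxes in its ``alternative approach'' subsection.
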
 

\begin{proof}
	From Lemmas~\ref{lemma:LO} and~\ref{lemma:LM}, $L_{\Omega}' = p L_{\Omega}$ and $L_{M}' = (n/p) L_{M}$ where $p$ is an integer number. Moreover, from Theorem~\ref{theorem:ij} we know that $p|n$ and thus, $n/p$ is also an integer number. Additionally, since there is no other constraint in the phasing between orbital planes, this means that the combination numbers $L_{M\Omega}'$ are free to choose, that is, $L_{M\Omega}'\in\{0,1,\dots,L_{\Omega}'-1\}$. Therefore, for a given value $p$ there are $pL_{\Omega}$ different combinations. By adding the effect of all the possible distributions in number of orbital planes, the number of possible compatible constellations for $nN_s$ satellites under these condition is:
	\begin{equation}
	\sum_{\substack{p=1\\ p|n}}^n pL_{\Omega}.
	\end{equation}
\end{proof}

With these relations, we can obtain directly the equivalent distribution using the 2D Necklace Flower Constellation formulation. Particularly, Theorem~\ref{theorem:necklaces} can be used to define the necklace, Theorem~\ref{theorem:symmetries} is applied to obtain the symmetries of the necklace, and Theorem~\ref{theorem:2dnfc_relation} can be used to derive the shifting parameter that makes the distribution congruent in the space.

\subsection{Alternative approach}

Previously we have considered the case in which the number of satellites or slots in the final distribution is an integer number of times the number of satellites or slots in the original distribution. For the case of compatible expansions, this is always true do to the constrains imposed by the compatibility conditions. However, when allowing satellites to move in their orbital planes this condition is no longer required.

\begin{theorem}
    Let $N_s' = L_{\Omega}'L_{M}'$ and $N_s=L_{\Omega}L_{M}$ be the number of satellites of the final and original distributions respectively. Then, the final 2D Lattice Flower Constellations that are compatible with the original orbital plane distribution fulfill the following conditions:
    \begin{equation}
        L_{\Omega}' = p L_{\Omega}; \quad pL_M' = \displaystyle\frac{N_s'}{L_{\Omega}}; \quad L_M'\in\left[L_M,\frac{N_s'}{L_{\Omega}}\right];
    \end{equation}
    where:
    \begin{equation}
        p=\{1,2,\dots, \Big\lfloor \displaystyle\frac{N_s'}{N_s}\Big\rfloor\}. \quad \text{and} \quad L_{M\Omega}'\in\{0,\dots,L_{\Omega}'-1\}.
    \end{equation}
    These constraints provide a number of possible combinations equal to:
    \begin{equation}
        \displaystyle\frac{1}{2}\left(\Big\lfloor \displaystyle\frac{N_s'}{N_s}\Big\rfloor + 1\right)\Big\lfloor \displaystyle\frac{N_s'}{N_s}\Big\rfloor L_{\Omega}.
    \end{equation}
\end{theorem}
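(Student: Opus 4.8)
The plan is to follow the derivations of Lemmas~\ref{lemma:LO} and~\ref{lemma:LM} and of Theorem~\ref{theorem:plane}, but this time without requiring the original mean anomalies to be preserved, only the set of orbital planes. First I would pin down the planes: demanding that every right ascension $2\pi i/L_{\Omega}$ of the original constellation coincide with some right ascension $2\pi i'/L_{\Omega}'$ of the final one forces, exactly as in Lemma~\ref{lemma:LO}, $L_{\Omega}' = p L_{\Omega}$ for an integer $p \geq 1$; concretely the original plane $i$ becomes the final plane $i' = p\,i$. Substituting into $N_s' = L_{\Omega}'L_{M}'$ and using $N_s = L_{\Omega}L_{M}$ gives $p L_{M}' = N_s'/L_{\Omega}$, which forces $pL_{\Omega}\mid N_s'$ so that $L_{M}'$ is a positive integer.

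Next I would establish the range of $L_{M}'$. Because the orbits are frozen, each of the $L_{\Omega}$ original planes retains exactly its $L_{M}$ satellites, and in the final distribution that plane offers only $L_{M}'$ distinct slots; hence $L_{M}\leq L_{M}'$. Combined with $pL_{M}' = N_s'/L_{\Omega}$ and $p\geq 1$ this gives $L_{M}\leq L_{M}'\leq N_s'/L_{\Omega}$, and solving $p = N_s'/(L_{\Omega}L_{M}') \leq N_s'/(L_{\Omega}L_{M}) = N_s'/N_s$ yields $p\in\{1,\dots,\lfloor N_s'/N_s\rfloor\}$. It then remains to argue sufficiency, i.e.\ that any parameter triple obeying these conditions is realizable and that no compatibility relation constrains the combination number: since each satellite may be repositioned arbitrarily within its own (fixed) plane, the placement of the original satellites into final 2D~LFC slots decouples plane by plane and only needs $L_{M}\leq L_{M}'$, so every $L_{M\Omega}'$ in the non-duplicate range $\{0,\dots,L_{\Omega}'-1\}$ (the range itself being the convention of Ref.~\cite{ndnfc}) gives an admissible configuration.

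Finally I would count. For a fixed admissible $p$ the pair $(L_{\Omega}',L_{M}') = \bigl(pL_{\Omega},\,N_s'/(pL_{\Omega})\bigr)$ is determined, leaving $L_{\Omega}' = pL_{\Omega}$ choices of $L_{M\Omega}'$; summing over $p = 1,\dots,\lfloor N_s'/N_s\rfloor$ gives
\begin{equation}
\sum_{p=1}^{\lfloor N_s'/N_s\rfloor} p\,L_{\Omega} \;=\; \frac{1}{2}\left(\Big\lfloor \frac{N_s'}{N_s}\Big\rfloor + 1\right)\Big\lfloor \frac{N_s'}{N_s}\Big\rfloor L_{\Omega}.
\end{equation}

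The step I expect to require the most care is the claim that \emph{every} $p$ in $\{1,\dots,\lfloor N_s'/N_s\rfloor\}$ yields a configuration: strictly this also needs $pL_{\Omega}\mid N_s'$ so that $L_{M}' = N_s'/(pL_{\Omega})$ is an integer, so I would either restrict the sum to those $p$ or, as the statement implicitly does, assume $N_s'$ is chosen so that this divisibility holds across the whole range. The other point worth spelling out explicitly is \emph{why} the combination number is genuinely unconstrained here, namely that relaxing the mean-anomaly condition removes the Diophantine compatibility relation of Theorem~\ref{theorem:ij} altogether rather than merely enlarging its solution set.
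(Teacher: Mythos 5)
Your proof follows essentially the same route as the paper's: $L_{\Omega}' = pL_{\Omega}$ from the fixed-plane requirement (as in Lemma~\ref{lemma:LO}), $pL_{M}' = N_s'/L_{\Omega}$ from the satellite count, $L_{M}'\geq L_{M}$ so that each frozen plane can accommodate its original satellites, an unconstrained combination number, and the arithmetic-series sum $\sum_{p} pL_{\Omega}$. The divisibility caveat you flag (that $pL_{\Omega}\mid N_s'$ is needed for $L_{M}'$ to be an integer, so not every $p$ in $\{1,\dots,\lfloor N_s'/N_s\rfloor\}$ need be admissible) is a genuine point that the paper's own proof also glosses over, so your version is, if anything, slightly more careful.
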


\begin{proof}
    Since the original orbital planes have to remain unaltered and the resultant structure has to be uniform, the number of orbital planes in the final distribution has to be a multiple of the original number of orbital planes, that is, $L_{\Omega}' = p L_{\Omega}$ (see also Lemma~\ref{lemma:LO}). Conversely, the final number of satellites per orbital plane does not have to be a multiple of the original number since we are allowing satellites to move in their orbital planes. However this introduces a constraint, $L_M' \geq L_M$ to assure that satellites and slot in the original distribution can remain in their initial orbits. Apart from that, $L_{\Omega}'$, $L_{M}'$ and $L_{M\Omega}'$ have no additional restrictions. 
    
    From the expression relating the number of satellites of the final configuration with the 2D Lattice Flower Constellation distribution parameters:
    \begin{equation}
        N_s' = L_{\Omega}'L_{M}' = pL_{\Omega}L_{M}' \longrightarrow pL_M' = \displaystyle\frac{N_s'}{L_{\Omega}}.
    \end{equation}
    Moreover, by definition, $p$ is a strict positive integer number, which in combination with the constraint $L_M' \geq L_M$ imposes the following condition:
    \begin{equation}
        L_M'\in\left[L_M,\frac{N_s'}{L_{\Omega}}\right],
    \end{equation}
    and thus, the possible values of $p$ are $p=\{1,2,\dots, \lfloor N_s'/N_s \rfloor\}$, where $\lfloor x \rfloor$ is the round down value of x.
    
    Now, the objective is to derive the number of possible combinations resultant of applying these constraints into the 2D Lattice Flower Constellation formulation. We know that $p$ can only have very specific values, particularly, $p=\{1,2,\dots, \lfloor N_s'/N_s \rfloor\}$, where each of these values defines a different distribution in orbital planes for the space structure. Additionally, since satellites are allowed to be repositioned in their orbital planes, the configuration number has no restriction and thus, $L_{M\Omega}'\{0,\dots,L_{\Omega} - 1\}$ for each value of $p$. Therefore, the number of possible combinations is:
    \begin{equation}
        \sum_{p=1}^{\lfloor N_s'/N_s \rfloor} pL_{\Omega} = L_{\Omega} \sum_{p=1}^{\lfloor N_s'/N_s \rfloor} p = \displaystyle\frac{1}{2}\left(\Big\lfloor \displaystyle\frac{N_s'}{N_s}\Big\rfloor + 1\right)\Big\lfloor \displaystyle\frac{N_s'}{N_s}\Big\rfloor L_{\Omega}.
    \end{equation}
\end{proof}

\subsection{Example of application for orbital plane maintenance} \label{sec:explane}

As in the example from previous sections, we select as the original distribution the 2D Lattice Flower Constellation defined by $L_{\Omega} = 246$, $L_{M} = 7$, $L_{M\Omega} = 224$. The goal is to find a final distribution with double the number of slots that maximizes the minimum distance between satellites while maintaining the original satellites in their orbital planes.

\begin{figure}[!h]
	\centering
	{\includegraphics[width=0.9\textwidth]{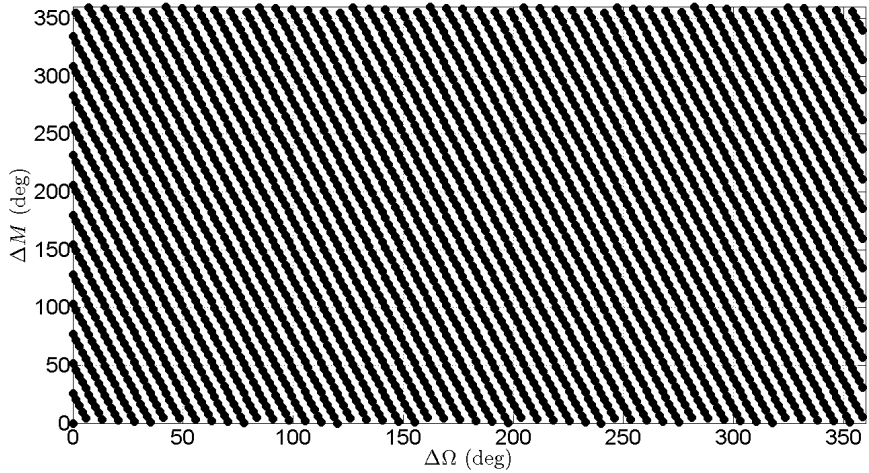}} 
	\caption{$(\Omega-M)$-space representation of the 2D Lattice Flower Constellation expansion maintaining the distribution in orbital planes.}
	\label{fig:plane}
\end{figure}

From Theorem~\ref{theorem:plane} we can derive that the number of possible different configurations compatible with the constraints is $246\cdot 3 = 738$. If all these constellations are checked in terms of minimum distance, we obtain that the configuration that presents a largest minimum distance between satellites is the one defined by $L_{\Omega} = 246$, $L_{M} = 14$, $L_{M\Omega} = 51$ with a minimum distance between spacecraft of $0.3909^{\circ}$. Figure~\ref{fig:plane} shows the $(\Omega-M)$-space representation of the this slotting architecture. Compared to the result from the example from Section~\ref{sec:exnonuniform} we can see that this result is worse in terms of minimum distance than the one obtained by doing the non-uniform distribution. This happens for two main reasons. First, the original constellation was already optimized for minimum distance for a given number of satellites. This means that increasing the number of satellites in the structure radically changes the moments in which the intersections between orbits happen, and thus, it imposes a constraint that limits the performance of this approach if we fix the original orbital planes. Second, when having a large number of satellites, the minimum distance between spacecraft is very sensitive to small changes in inclination as already pointed out in Ref.~\cite{stm}. For instance, if instead of maintaining the original inclination of $60^{\circ}$, we change it slightly to $59.2^{\circ}$ we can obtain a structure with a much better performance. Particularly, if $L_{\Omega} = 492$, $L_{M} = 7$, $L_{M\Omega} = 122$, a minimum distance of $0.5544^{\circ}$ is obtained, which slightly outperforms the result from Section~\ref{sec:exnonuniform}.

In addition, we can study the maximum angular separation that a spacecraft should move in order to get repositioned from the original configuration to the final slotting architecture. This allows to study the fuel budget required should an impulsive maneuver is performed to change both mean anomaly and right ascension of the ascending node of the satellites. If that is the case, and for a worst case scenario where all the original constellation is full, a maximum angle of $0.145^{\circ}$ is obtained for the case of $L_{\Omega} = 246$, $L_{M} = 14$, $L_{M\Omega} = 51$ and $inc = 60^{\circ}$, and $0.112^{\circ}$ for the case of $inc = 59.2^{\circ}$ and $L_{\Omega} = 492$, $L_{M} = 7$, $L_{M\Omega} = 122$. As can be seen these angular distances are small compared to the minimum distance between satellites. Additionally, and when doing reconfigurations, it is important that it is not necessary to move the satellites to the center of the new slots as long as they can maintain their dynamics inside their new slots. This allows to reduce this maximum distance even further. Note also that this maneuver happens inside the original and the final minimum distance and thus, it is possible to perform this maneuver safely within the structure. Another interesting result to note is that the maximum angle is smaller for the case of changing the inclination of the slotting architecture. This means that there are cases where a change in inclination reduces the reconfiguration fuel requirements (under the assumption of completely impulsive reconfiguration) as well as the capacity performance of the system.


\section{Complete reconfiguration}

In this expansion and reconfiguration we consider the case in which any satellite from the original distribution is able to reposition itself in the constellation including its relative position in mean anomaly and right ascension of the ascending node. This kind of reconfiguration can be performed by using the differential mean motion and plane shift of each satellite when the spacecraft has a slightly different semi-major axis than the constellation reference. Note also that this approach is equivalent to find all possible uniform distributions that can be obtained with $N_s'$ number of satellites.

\begin{figure}[!h]
	\centering
	{\includegraphics[width=0.9\textwidth]{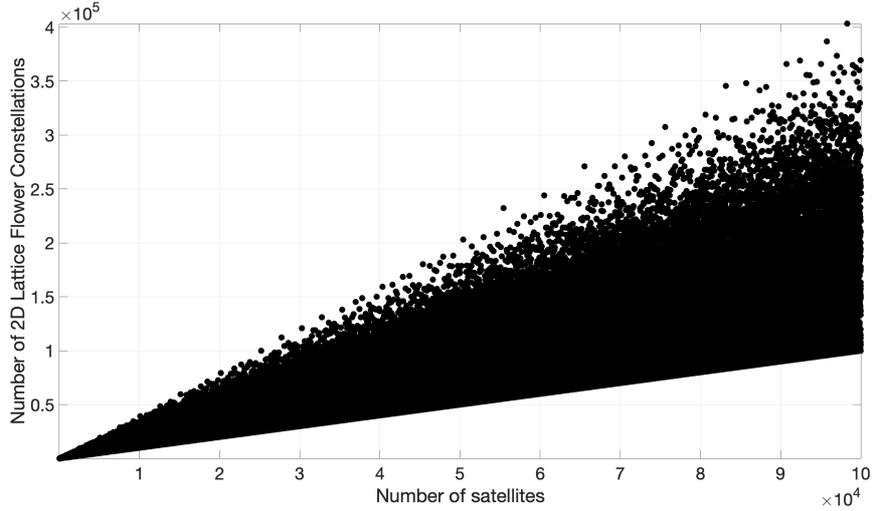}} 
	\caption{Number of different 2D Lattice Flower Constellations as a function of the number of satellites.}
	\label{fig:count_complete_reconfiguration}
\end{figure}

\begin{theorem} \label{theorem:complete_reconfiguration}
	The number of different 2D Lattice Flower Constellation distributions that can be generated with a set of $N_s'$ satellites is:
	\begin{equation}
	\sum_{\substack{L_{\Omega}'=1\\ L_{\Omega}'|N_s'}}^{N_s'} L_{\Omega}'.
	\end{equation}
\end{theorem}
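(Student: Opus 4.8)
The plan is to reduce the count to an elementary divisor count by treating the choice of a 2D Lattice Flower Constellation with a prescribed satellite total as the choice of an ordered factorization together with a phasing. First I would use the identity $N_s' = L_{\Omega}' L_{M}'$, which holds for every 2D Lattice Flower Constellation by construction (Eq.~\eqref{eq:2dlfc}): this forces $L_{\Omega}'$ to be a divisor of $N_s'$ and then pins down $L_{M}' = N_s'/L_{\Omega}'$ uniquely. Hence the only free data are a divisor $L_{\Omega}'$ of $N_s'$ together with a configuration number $L_{M\Omega}'$.

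Next I would invoke the duplicate-avoiding convention already used throughout the paper, namely $L_{M\Omega}' \in \{0,1,\dots,L_{\Omega}'-1\}$ together with the fact (from Ref.~\cite{ndnfc}) that within this range distinct values of $L_{M\Omega}'$ produce genuinely inequivalent distributions. Thus, for each admissible value of $L_{\Omega}'$ there are exactly $L_{\Omega}'$ inequivalent 2D Lattice Flower Constellations with $N_s'$ satellites and that many orbital planes.

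Then I would argue there is no overcounting across different values of $L_{\Omega}'$: the quantity $L_{\Omega}'$ equals the number of distinct values taken by $\Delta\Omega'$ in the distribution (Eq.~\eqref{eq:2dlfcfinal}), so it is a distribution invariant and configurations with different $L_{\Omega}'$ are automatically distinct. Summing the per-$L_{\Omega}'$ counts over all divisors of $N_s'$ then yields
\begin{equation}
\sum_{\substack{L_{\Omega}'=1\\ L_{\Omega}'|N_s'}}^{N_s'} L_{\Omega}',
\end{equation}
which is the claimed expression (it is of course the sum-of-divisors function of $N_s'$, matching the behaviour displayed in Figure~\ref{fig:count_complete_reconfiguration}).

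I expect the only delicate point to be the faithfulness and completeness of the enumeration, i.e.\ that the canonical range for $L_{M\Omega}'$ lists every congruent distribution exactly once and that no additional coincidences arise for special values of $L_{\Omega}'$ (such as $L_{\Omega}'=1$ or $L_{M}'=1$). In this manuscript, however, that is exactly the content of the cited characterization of 2D Lattice Flower Constellations in Ref.~\cite{ndnfc}, so it enters the argument as a bookkeeping fact rather than as a new obstacle; the remainder of the proof is the observation that picking a uniform distribution of $N_s'$ satellites amounts to picking an ordered factorization $N_s' = L_{\Omega}'\cdot L_{M}'$ and then an inter-plane phasing.
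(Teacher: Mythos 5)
Your proposal is correct and follows essentially the same route as the paper's proof: factor $N_s' = L_{\Omega}'L_{M}'$ to force $L_{\Omega}'\mid N_s'$, count $L_{\Omega}'$ choices of $L_{M\Omega}'\in\{0,\dots,L_{\Omega}'-1\}$ for each divisor, and sum. Your added remarks on why there is no overcounting across different values of $L_{\Omega}'$ and on the faithfulness of the canonical range for $L_{M\Omega}'$ are points the paper leaves implicit (deferring to Ref.~\cite{ndnfc}), but they do not change the argument.
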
 

\begin{proof}
	Let $L_{\Omega}'$ be the number of final different orbital planes in which the constellation is distributed for a given configuration. Since the distribution is uniform and follow the 2D Lattice Flower Constellation formulation, $N_s' = L_{\Omega}'L_{M}'$ where $L_{\Omega}'$ and $L_{M}'$ are integer numbers. This means that $L_{\Omega}'$ and $L_{M}'$ are divisors of $N_s'$, and thus, $L_{\Omega}'|N_s'$ and $L_{M}'|N_s'$. Moreover, $L_{M\Omega}'\in\{0,1,\dots,L_{\Omega}'-1\}$ since there is no restriction in the relative phasing of satellites in their orbital planes. Therefore, for each value of $L_{\Omega}'$ there are $L_{\Omega}'$ different possibilities of configuration of the constellation. Thus, the total number of different uniform distributions for $N_s'$ satellites is the sum of the possibilities generated by the values of $L_{\Omega}'$ that are compatible, that is:
	\begin{equation}
	\sum_{\substack{L_{\Omega}'=1\\ L_{\Omega}'|N_s'}}^{N_s'} L_{\Omega}'.
	\end{equation}
\end{proof}

Figure~\ref{fig:count_complete_reconfiguration} shows the variation of the number of possible combinations as a function of the number of satellites of the constellation. From the figure it can be observed that there is a lower bound in the distribution defined by a number of combinations equal to $N_s + 1$ that happens when the number of satellites in the constellation is a prime number. Moreover it can also be noted that the number of possible combinations is in the order of $\mathcal{O}(N_s)$. Note also that this evolution can be used to assess the results from Theorems~\ref{theorem:count1} and~\ref{theorem:plane} since the summation has the same form.

\subsection{Example of application to complete reconfiguration}

As with the previous examples, the original 2D Lattice Flower Constellation considered in the one with distribution parameters $L_{\Omega} = 246$, $L_{M} = 7$, $L_{M\Omega} = 224$. However, in this case, we are going to approach the problem a bit differently, instead of setting the number of satellites of the final distribution, we impose a minimum distance that has to be fulfilled by all the satellites of the constellation. To that end, and for comparison purposes, we select the minimum distance obtained in the example of Section~\ref{sec:exnonuniform}, that is, $0.5536^{\circ}$, and perform a brute force search of all the possible 2D Lattice Flower Constellations as described in Ref.~\cite{stm}. The result of this search is a 2D Lattice Flower Constellation with the following distribution parameters: $L_{\Omega} = 4243$, $L_{M} = 1$, $L_{M\Omega} = 951$, and a minimum distance between satellites of $0.5661^{\circ}$. This final constellation contains $4243$ satellites, a $23\%$ increase when compared with the result from the example of Section~\ref{sec:exnonuniform}.

\begin{table}[!h]
	\centering	
	\caption{Best 2D Lattice Flower Constellations close to $inc = 60^{\circ}$.}
	{
		\begin{tabular}{|c|c|c|c|c|c|}
			\hline
			\bf{inc (deg)} & $\mathbf{\gamma}$ \bf{(deg)} & $\mathbf{N_s'}$ & $\mathbf{L_{\Omega}'}$ & $\mathbf{L_{M}'}$ & $\mathbf{L_{M\Omega}'}$ \\
			\hline
			59.0 & 0.5545 & 4425 & 4425 & 1 & 3225 \\
			\hline
			59.2 & 0.5648 & 4285 & 857 & 5 & 207 \\
			\hline
			59.3 & 0.5539 & 4667 & 4667 & 1 & 726 \\
			\hline
			59.4 & 0.5649 & 4366 & 4366 & 1 & 444 \\
			\hline
			59.7 & 0.5597 & 4302 & 2151 & 2 & 445 \\
			\hline
			60.0 & 0.5661 & 4243 & 4243 & 1 & 951 \\
			\hline
			60.1 & 0.5642 & 4243 & 4243 & 1 & 951 \\
			\hline
			60.2 & 0.5613 & 4488 & 408 & 11 & 102 \\
			\hline
			60.3 & 0.5561 & 4341 & 4341 & 1 & 1248 \\
			\hline
			60.5 & 0.5654 & 4444 & 2222 & 2 & 909 \\
			\hline
			60.9 & 0.5623 & 4611 & 4611 & 1 & 1855 \\
			\hline
		\end{tabular}
	}
	\label{tab:complete_recon_inc}
\end{table}

The previous result can be improved even further if some variation in the inclination of the constellation is allowed. Particularly, Table~\ref{tab:complete_recon_inc} contains a list of 2D Lattice Flower Constellations that, under the condition of minimum distance of at least $\gamma = 0.5536^{\circ}$, are able to contain more than $4243$ satellites. For this computation, only increments of $0.1^{\circ}$ in inclination where considered ranging from $inc\in[59^{\circ},60^{\circ}]$, where an exhaustive search of 2D Lattice Flower Constellations was performed for each of these inclinations. These results show that by allowing small changes in inclination it is possible to increase the number of satellites up to $10\%$ for this particular example. This shows the sensibility that large constellations and slotting architectures have under variations in the inclination, and how this sensibility can be used to improve a given architecture design.

If instead of allowing to change the number of satellites in the final distribution, we fix the number of satellites to double as in previous examples, the best 2D Lattice Flower Constellation that is obtained is $L_{\Omega} = 3440$, $L_{M} = 1$, $L_{M\Omega} = 92$, which has a minimum distance between satellites of $0.4438^{\circ}$. Its $(\Omega-M)$-space representation can be seen in Figure~\ref{fig:complete}. Compared to previous results at the same inclination, this solution performs better than the one from Section~\ref{sec:explane} (note that this solution belongs to the set studied in a complete reconfiguration), but worse than the one from Section~\ref{sec:exnonuniform}. This is due to the fact that fixing the number of satellites and the inclination still impose important constraints in the constellation design, as they control when and where the intersections between satellites happen. If, as done previously, we allow some variation in the inclination, this result can be improved. Particularly, if an inclination of $59.2^{\circ}$ is selected, the 2D Lattice Flower Constellation defined by $L_{\Omega} = 861$, $L_{M} = 4$, $L_{M\Omega} = 840$ has a minimum distance between satellites of $0.5671^{\circ}$, which is again better than the result obtained in previous sections.

\begin{figure}[!h]
	\centering
	{\includegraphics[width=0.9\textwidth]{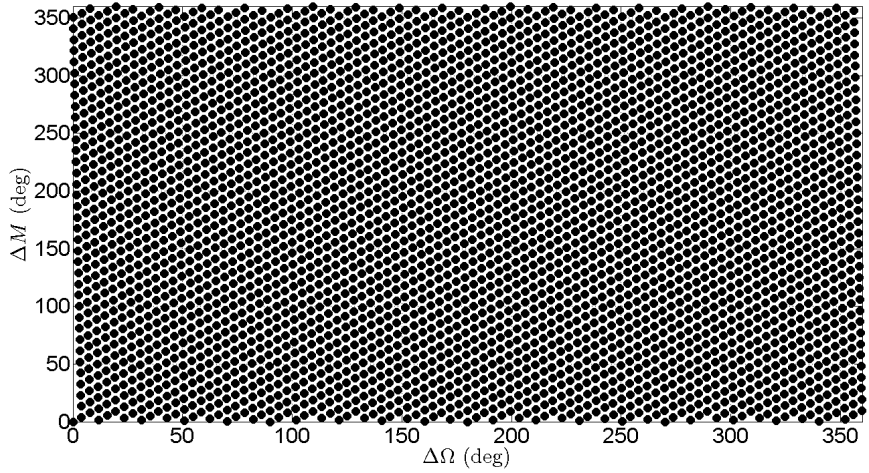}} 
	\caption{$(\Omega-M)$-space representation of the complete reconfiguration.}
	\label{fig:complete}
\end{figure}

It is important to note that, in general, performing a complete reconfiguration does not necessary guarantee the best performance when considering minimum distance between satellites if either the number of satellites, the constellation inclination, or both of them are fixed for the optimization process. This happens due to the sensitivity that minimum distances have under small changes in the inclinations and mean anomalies of satellites (which in a 2D Lattice Flower Constellation also depend on the total number of spacecraft contained in the constellation). Therefore, by allowing a variation of the inclination, even if these variations are small, and on the number of satellites of the configuration, it is possible to improve significantly the performance of the constellation as it is shown in this section.

Finally, it is also interesting to assess the maximum variation that any satellite in one of the original slots may experience under this kind of reconfiguration if it requires to maintain its coherence with the final slotting architecture. In that regard, and should we focus on changes in the right ascension of the ascending node of the orbits due to their requirements in fuel budget or time if the J2 perturbation is used to perform the maneuver, the problem then consist on finding the farthest distance between any orbital plane from the original distribution to any orbital plane from the final distribution. By doing this computation in the case of $L_{\Omega} = 3440$, $L_{M} = 1$, $L_{M\Omega} = 92$, we obtain a maximum angular distance between orbital planes of $0.0519$ deg. If the slot in that orbital plane is already occupied by another satellite in the constellation (remember that $L_{\Omega}$ can be larger than $L_{\Omega}'$ in this case of study), then the satellite would require to change its right ascension of the ascending node by $0.0528$ deg. If this also is occupied, then plane distances increase with $2\pi/L_{\Omega}'$ (the minimum inter plane distance in the final distribution). This implies that satellites will require a minimal change in their right ascension of the ascending nodes even for the worst case scenario to change to the new slotting configuration. If instead, we consider a complete impulsive maneuver where satellites change their mean anomaly and right ascension of the ascending node to reach the center of the new slot, the maximum angle that a satellite would have to move is $0.163^{\circ}$, which is again small, more so if we consider that satellites do not have to move exactly to the slot center. In that regard, note also that this distance is much smaller than the minimum distance between satellites in the final configuration.


\section{Slot substitution and addition}

The idea of this approach is to study the possibilities of expanding the initial slotting architecture when we allow slots to be smaller due to a tighter control in all the satellites compliant with the space structure. To that end, two approaches are studied, the direct slot substitution by the new ones, and the direct addition of smaller slots within the original slotting architecture. Both approaches can and should be combined to better optimize the capacity of the slotting architecture at a given altitude.

\subsection{Slot substitution}

In this approach we consider the cases in which the new defined slot size for one, several or all the slots of the architecture is several times smaller than the original slots. This allows to perform the direct substitution of the old bigger slots by the new smaller slots. Particularly, and in order to maintain the minimum distances stable through the dynamics of the satellites, the new slots have to be positioned in the same orbit than the original slot, but with a relative phasing in mean anomaly that depends on the size of the slot and the number of new slots defined. In other words, let $\rho$ be the non-dimensional size of the slot in the original distribution, that is, the slot size divided by the semi-major axis of the orbit. Then, we can use the expression provided by Ref.~\cite{mindis}:
\begin{eqnarray}\label{eq:midis}
\rho_{min} & = & 2\left|\displaystyle\sqrt{\displaystyle\frac{1+\cos^2(inc)+\sin^2(inc)\cos(\Delta\Omega)}{2}}\sin\left(\frac{\Delta F}{2}\right)\right|; \nonumber \\
\Delta F & = & \Delta M - 2\arctan\left(-\cos(inc)\tan\left(\displaystyle\frac{\Delta\Omega}{2}\right)\right),
\end{eqnarray}
to obtain the minimum distance between a pair of satellites at the same altitude and circular orbits. by using this expression, the size of the slot expressed in mean anomaly when considering the same orbit is:
\begin{equation}
\Delta M \in \left(-\arcsin\left(\displaystyle\frac{\rho}{2}\right), \arcsin\left(\displaystyle\frac{\rho}{2}\right)\right),
\end{equation}
with respect to the original center position of the slot. Therefore, it is possible to distribute $n$ smaller slots of size $\rho'$:
\begin{equation}
\rho' = 2\sin\left(\displaystyle\frac{1}{n}\arcsin\left(\frac{\rho}{2}\right)\right), 
\end{equation}
in the following relative positions:
\begin{equation}
\Delta M_k = \left(\left(\displaystyle\frac{1}{n}-1\right)+\left(k-1\right)\frac{2}{n}\right)\arcsin\left(\frac{\rho}{2}\right), 
\end{equation}
where $k\in\{1,\dots,n\}$ names each one of the new slots. Note that following this procedure the new slots are still contained in the original slot and thus, they fulfill automatically the minimum distances with the rest of the slotting architecture. Therefore, it is possible to do this direct substitution in a one to one basis, or even in the whole space structure while maintaining all the safety conditions of the system.

\subsection{Addition of slots}

In this final case of study we consider the situation where the original slotting architecture is maintained, but we include new slots that have a smaller size than the original ones from the architecture. This allows to define these new slots within the original structure without requiring to modify any of the existent slots.

To that end, we propose a methodology based on Section~\ref{sec:non_uniform}, but instead of considering just one size of slots, we take into account that each subset of slots defined by the 2D Lattice Flower Constellation formulation can have a different slot sizes. Particularly, the slots from the original distribution are left unaltered while the new slots will be smaller in size. The process consist on finding the distribution of points in $\{\Delta\Omega\in[0,2\pi/L_{\Omega}), \Delta M\in[0,2\pi/L_{M})\}$ that maximizes the minimum distance with the slots already defined in the constellation and place the new slots in these positions. This is in fact the same process defined in Section~\ref{sec:non_uniform} when the non-uniform expansion was presented. However, in this case the slot size of the new slots is different to the original ones and depends on the number of additional slots added in each pattern of the constellation. For instance, in the case of adding an additional slot per constellation pattern (which implies doubling the number of slots of the configuration), the maximum slot size of these new slots is equal to:
\begin{equation}
\rho' = 2\left(\alpha_m - \displaystyle\frac{\rho}{2}\right),
\end{equation}
where $\rho'$ and $rho$ are the angular slot sizes of the new and old slots respectively, and $\alpha_m$ is the minimum distance obtained for these points with the rest of the slotting architecture. In other cases with more slots added in each constellation pattern, this maximum slot size will depend on the relative distribution of these new slots.

\subsection{Example of substitution and addition of slots}

As in previous examples, the goal is to expand the slotting architecture defined by $L_{\Omega} = 246$, $L_{M} = 7$, $L_{M\Omega} = 224$ and double the number of slots presented in the configuration. However, in this case, the original slots will remain unaltered while the new slots will have a smaller size. From the example presented in Section~\ref{sec:exnonuniform} we know that the point in the constellation pattern that maximizes the minimum distance with the original slotting architecture is $\Omega_1 = 1.2995^{\circ}$ and $M_1 = 50.2251^{\circ}$, which presents a minimum distance of $0.5536^{\circ}$ with the original configuration. This means that it is possible to define slot sizes of $0.0942^{\circ}$ for this new slots without altering the original slots of the structure. As it can be seen, the size is very small compared to the one of the original distribution ($1.0130^{\circ}$). This is caused by the fact that the original distribution was already optimized for minimum distance and thus, slots were already filling the majority of the available space. On the other hand, the slot distribution follows the one presented in Figure~\ref{fig:non_uniform_compatible}. 

\begin{figure}[!h]
	\centering
	{\includegraphics[width=0.9\textwidth]{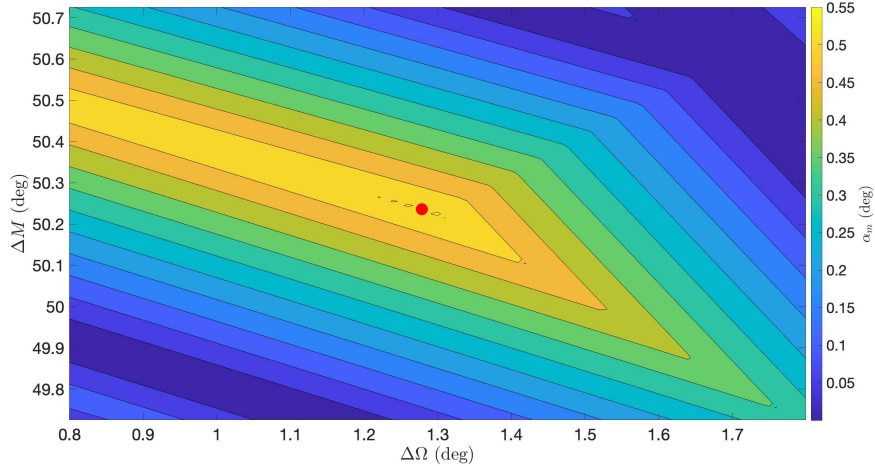}} 
	\caption{Map of the minimum distance in the region close to the optimal solution.}
	\label{fig:zoommap}
\end{figure}

Moreover, it is also interesting to study the variation of the minimum distance when modifying the position of these new slots. Figure~\ref{fig:zoommap} shows a map of the minimum distance ($\alpha_m$) from the region close this optimal point (represented by the red dot in the center of the plot). This figure provides the sensitivity of the minimum distances with small variations in either mean anomaly and right ascension of the ascending node. As can be seen, there is a region around the defined point where locating the new slots provides a good performance.


\section{Conclusions}

In this work we propose a set of methodologies for the reconfiguration of uniform satellite constellations and slotting architectures. The goal of these methodologies is to provide tools to study all the potential reconfigurations and compatible expansions that are available in these kind of space structures under certain conditions. Particularly, we deal with the problem of expanding a uniform structure while maintaining the original positions, we also study the case where the initial satellites and slots are allowed to be repositioned within their original orbits, and the case where a complete reconfiguration of the structure is performed. In addition, we also consider the case where the expansion is non-uniform with respect to the original distribution, the direct substitution of slots, and the addition of smaller slots within the original space structure. To that end, we make use of the formulations of Lattice Flower Constellations and 2D Necklace Flower Constellations to help us define and study these uniform satellite and slotting structures. 

It is important to note that the reconfiguration possibilities provided by the methodologies proposed in here can also be directly applied to elliptic orbits since at any point in the derivation of the reconfiguration theorems presented, the circular condition of the orbits was assumed. Therefore, the reconfiguration possibilities remain true for both circular and elliptic orbits, being the difference between them the different expressions that have to be used to compute the minimum distances between spacecraft and slots.

Moreover, the methodologies presented in this work can also be combined together to benefit from the properties of each of these techniques. This is especially interesting for the direct substitution and addition of slots with the rest of the proposed methodologies due to the reconfiguration possibilities that they can provide in the evolution of uniformly distributed space systems. Another important thing to note is that the defined slots do not require to only contain only one satellite. In fact, several satellites in a flight formation can be positioned inside a single or several slots as long as the spacecrafts are compliant with the rest of the slotting architecture.

The examples of reconfiguration presented in this work focus on circular orbits due to their widespread use in low Earth orbits and slotting architectures. Particularly, we selected a slotting architecture that was already optimized for minimum distance between satellites to be the space structure in from which to base these expansions. As a result of that, the final distributions presented are not optimal regarding minimum distance between satellites. This issue can be solved by performing the optimization in the final distribution or, alternatively, in both the original and final distributions simultaneously. This will improve the performance results presented in here. However, since the main goal of this work was to provide a reconfiguration framework, we opted to focus on already defined space architectures as the one used in the examples provided.

Finally, it is important to note that the techniques presented in this work are able to significantly reduce the computational time to find reconfiguration options for satellite constellations and slotting architectures. This is done both by reducing the searching space for optimization techniques, and by analytically defining all the solutions that are compatible with a set of given conditions. Due to the number of objects in slotting architectures and planned megaconstellations, this property can help study those systems more efficiently while providing deeper insight on the inherent structure contained in these space architectures.

\section{Acknowledgements}
This work is sponsored by the Defense Advanced Research Projects Agency (Grant N66001-20-1-4028), and the content of the information does not necessarily reflect the position or the policy of the Government.  No official endorsement should be inferred. 
Distribution statement A: Approved for public release; distribution is unlimited.


\end{document}